\documentclass[11pt]{article}
\usepackage{geometry}
\usepackage{authblk}
\usepackage{amsmath,amssymb,mathtools,bm}
\usepackage{amsthm}
\let\originalproof\proof
\let\endoriginalproof\endproof
\RenewDocumentEnvironment{proof}{o}{%
  \IfNoValueTF{#1}{\originalproof}{\originalproof[\proofname\ (#1)]}%
}{\endoriginalproof}
\usepackage{booktabs,array,graphicx,xurl,placeins}
\usepackage[round]{natbib}
\usepackage{microtype}
\usepackage{tikz}
\usepackage{algorithm,algorithmicx,algpseudocode,dsfont}
\usetikzlibrary{positioning,arrows.meta,calc,fit}
\newtheorem{theorem}{Theorem}
\newtheorem{proposition}{Proposition}
\newtheorem{lemma}{Lemma}
\newtheorem{corollary}{Corollary}
\newtheorem{definition}{Definition}
\newtheorem{assumption}{Assumption}

\theoremstyle{remark}
\newtheorem{example}{Example}

\newwrite\figuredescriptions
\immediate\openout\figuredescriptions=DescriptionTexts.txt
\newcounter{figuredescription}
\newcommand{\Description}[1]{%
  \stepcounter{figuredescription}%
  \begingroup\stepcounter{figure}%
  \immediate\write\figuredescriptions{\arabic{figuredescription} (Fig. \thefigure) - \detokenize{#1}}%
  \endgroup\addtocounter{figure}{-1}%
}
\usepackage{hyperref}
\hypersetup{pdftitle={Effective-Balance Design in Ethereum Proof-of-Stake},
  pdfauthor={Bolin Zhang and Xuechao Wang}}
\newcommand{\keywords}[1]{\par\smallskip{\small\noindent\textbf{Keywords:} #1\par}}
\begin{document}

\title{Effective-Balance Design in Ethereum Proof-of-Stake}
\date{}
\author[1]{Bolin Zhang}
\author[1]{Xuechao Wang\thanks{Corresponding author: \texttt{xuechaowang@hkust-gz.edu.cn}.}}
\affil[1]{HKUST(GZ)}

\maketitle
\setcounter{footnote}{0}

\begin{abstract}
Proof-of-Stake (PoS) systems offer a scalable and energy-efficient alternative to Proof-of-Work (PoW). The economic security of PoS relies critically on incentivizing broad and meaningful participation through well-cali\-brated staking reward mechanisms.  We study Ethe\-reum's effective-balance rules as a mechanism-design problem in which the protocol chooses a reward coefficient, an activation threshold, and a maximum effective balance, and validators respond in the induced propor\-tional-allocation game.

For fixed participating validators and identity counts, we establish a unique Nash equilibrium in validator-level total stakes. When validators can split or merge their visible identities, each candidate identity structure must be evaluated after the market re-clears. Restructuring incentives therefore depend on post-reset payoff comparisons that account for all validators' stake responses.  For an illustrative simulation, we construct address-level  budget proxies from Beacon Chain deposits and withdrawals, Lido staking flows, Etherscan balances, and calibrated synthetic agents. With illustrative cost coefficients, lowering the reserve price from 32 ETH to 30 ETH increases active identities by 5.922\% and total effective stake by 0.098\%.

\end{abstract}
\keywords{Proof-of-stake, Mechanism design, Proportional allocation}

\section{Introduction}

The cornerstone of permissionless blockchains is a resource-based consensus protocol, whose security fundamentally relies on the amount of resources actively committed to the network. Bitcoin pioneered this idea with proof-of-work (PoW), and its implied energy wastage spurred the development of alternative designs such as proof-of-stake (PoS). Notably, Ethereum—one of the most widely used platforms—transitioned from PoW to PoS in 2022, marking a major milestone in blockchain design \citep{buterin2022proof} aimed at improving scalability and reshaping the economic underpinnings of network security.  

PoS and PoW differ in how participants commit resources to secure consensus~\citep{john2025proof}. In Ethereum, validators are selected to propose blocks with probabilities proportional to their effective stake and receive rewards for attestation duties. Honest participation earns staking rewards, while missed duties can incur penalties and specified conflicting proposals or votes can trigger slashing, potentially forfeiting the validator's stake~\citep{ethereumdocs}. This design exemplifies \textit{cryptoeconomic security}: an attacker must acquire and risk losing a substantial share of the total stake to compromise the system, making large-scale attacks prohibitively expensive. As more ETH is staked, the cost of a majority attack  rises and its feasibility wanes.

The staking reward mechanism is crucial for driving active participation: if rewards are too low, validators may opt out, weakening the network; if they are too high, they can encourage concentration or unsustainable token issuance. Ethereum's dynamic reward curve reduces marginal returns as total stake grows \citep{ethereumdocs}, but participation also depends on effective-balance constraints.

Ethereum's Pectra upgrade changed the effective-balance rules \footnote{https://ethereum.org/en/roadmap/pectra/}. Before Pectra,  each validator identity had a maximum effective balance of 32 ETH~\citep{ethereumdocs}. EIP-7251 raises this maximum effective balance to 2,048 ETH for compounding validators while keeping the 32 ETH activation threshold.\footnote{\url{https://ethereum.org/en/roadmap/pectra/}; \url{https://eips.ethereum.org/EIPS/eip-7251}.} The change lets large operators consolidate stake into fewer validators, but it leaves the entry threshold for smaller stakers unchanged. 
Under our expected-reward abstraction, we use two properties of Ethereum’s reward mechanism: aggregate issuance scales with the square root of total effective stake, and each validator’s expected protocol reward is proportional to its effective stake.

The central question of the paper is how reserve and cap rules move total effective stake and the operational visible-identity count when validators can also alter their visible identities.
We use these security--participation coordinates to compare the outcomes induced by Ethereum's effective-balance rule. The coordinates of this frontier are the total effective stake $S$, which raises the economic cost of an attack, and the operational visible-identity count $D_{\mathrm{op}}$, which is an observable indicator of identity-level operational decentralization.
The latter captures the extensive margin of protocol-facing participation.
The security–participation frontier depends on how validators adjust their stakes and identities in response to protocol rules. We therefore analyze both the protocol’s parameter choices and the resulting validator equilibrium.

\paragraph{Our contribution.} 
We study Ethereum staking as a mechanism-design problem with an induced proportional-allocation game. The protocol first commits to the reward coefficient, the activation threshold, and the maximum effective balance. Validators then respond by choosing their identities and effective stake. This gives the model a two-level structure: the protocol stage is a design commitment, and the validator stage is the induced equilibrium response.

The analysis considers two settings under complete information among validators. 
First, fixing the participating validators and one identity per validator, we prove that the induced proportional-allocation game has a unique Nash equilibrium. Second, we allow each validator to change its identity count (by splitting or merging) while coordinating the stake behind all its identities. In this game, each count deviation is evaluated after all validators re-optimize stake. Then, every feasible count profile has unique continuation totals and payoffs for the validators, and the induced finite restructuring game admits a mixed equilibrium.

We connect the model to data by constructing address-level bid limits from observed staking activity and available ETH balances, then use  these on-chain budget proxies in an illustrative reserve-price  simulation around the current 32 ETH threshold. Under the assumed costs, reducing the reserve from 32 ETH to 30 ETH increases active identities by 5.922\% and total effective stake by 0.098\%.

In summary, this paper makes the following contributions:
\begin{itemize}
    \item We propose a two-level mechanism-design framework that models Ethereum staking as a proportional-allocation game and treats PoW as a proportional-selection boundary case.
    
    \item We establish a unique fixed-participation quantity equilibrium, identify cap-induced splitting incentives, and formulate the protocol designer's optimization problem.
    
    \item We  provide an illustrative simulation of the model with blockchain-derived budget proxies  to examine how reserve  prices affect total effective stake and visible identity counts under  assumed costs.
\end{itemize}

\noindent \textbf{Paper structure.} Section~\ref{Preliminary} defines the Ethereum PoS abstraction and the expected-reward interpretation. Section~\ref{non-separable} characterizes the static induced game. Section~\ref{dynamic section} studies splitting and merging of validator identities under reserve and cap constraints. Section~\ref{simulation} explains how data are converted into bid limits and reports reserve-price counterfactuals. Section \ref{sec:discussion} concludes.

\subsection{Related Work}  

Classical auction and mechanism-design models evaluate allocation rules using objectives such as revenue and efficiency~\citep{myerson1981optimal,vickrey1961counterspeculation,christodoulou2015efficiency}. Related work also studies fairness, participation, and externalities~\citep{feldman1974fairness,seddighin2019externalities}. In our Ethereum staking model, the designer values total effective stake and operational participation.
These coordinates describe the resources committed to consensus and the breadth of active protocol-facing participation.

Proportional-allocation mechanisms are studied in profit sharing, budget-feasible procurement, contests, and shared-resource allocation~\citep{bachrach2013incentives,amanatidis2023partial,birmpas2022parallel,kelly1997charging,feldman2008proportional}. A central assumption in allocation-rate and scalar-parameterized mechanism results is that utility from the received allocation is strictly increasing and concave~\citep{johari2004efficiency,johari2009efficiency}. Tullock contests use similar proportional-selection logic to link costly investment to winning probabilities~\citep{garimidi2025tullock}. In our staking game, however, rewards depend on both the allocation share and total stake, so this fixed utility representation does not apply directly.
We therefore adapt the divisible-resource auction framework of~\citet{maheswaran2003nash} to characterize Nash equilibria with endogenous staking rewards.
Appendix~\ref{appendix:tullock-benchmark} develops the comparison.

Prior blockchain literature demonstrates that reward mechanisms inherently shape stake concentration. For example, geometric rewards can mitigate compounding in PoS wealth dynamics~\citep{fanti2019compounding}. Additionally, cap-and-margin reward sharing can steer stake pools toward a target structure~\citep{brunjes2020reward}, while redistribution rules generalize proportional allocation in PoW and PoS mining pools~\citep{CAN2022274}. In a different vein, \citet{chen2023gacha} models mining process as a single-agent Markov decision process. In contrast, we explicitly model the strategic interactions among validators to analyze equilibrium staking participation under Ethereum's specific protocol targets within a game-theoretic framework.

\section{Preliminaries and Model} \label{Preliminary}
\subsection{Preliminaries}
This section defines the institutional objects that enter the model and fixes the distinction between economic agents and protocol-facing validator identities.

\paragraph{Ethereum PoS rewards.} In each epoch, Ethereum assigns proposer and attestor duties to active validator identities. A realized slot has a single proposer and a committee of attestors, while the ex-ante probability of receiving these duties is increasing in effective stake. Active validators earn protocol rewards for correct participation, incur penalties for missed duties, and can be slashed for specified conflicting proposals or votes~\citep{ethereumdocs,chen2024game}. The payoff below is therefore an expected protocol-reward benchmark: a risk-neutral validator evaluates rewards over proposer and attestor randomness, and the expected reward share is proportional to effective stake.

We exclude additional reward sources such as EIP-1559 transaction tips \citep{roughgarden2021transaction} and maximal extractable value (MEV) \citep{daian2020flash,garimidi2025tullock}. These revenues arise from transaction ordering and execution-layer activity rather than from the staking reward formula, so they are not directly controlled by the protocol here. We also abstract from adversarial consensus strategies \citep{zhang2024max}, network latency, and restaking because they introduce attack timing, propagation, and cross-protocol risk channels outside the effective-balance mechanism. 

Prior to Pectra, each validator identity had an effective balance capped at 32 ETH. EIP-7251 raises the maximum effective balance to 2,048 ETH while keeping the 32 ETH activation threshold. In the notation below, the activation threshold is a reserve price \(\underline r\), and the maximum effective balance is an upper bound \(\bar r\). The base-reward formula can be written schematically as
\begin{equation}
    \text{base reward} = \frac{\textit{effective balance} *\texttt{BASE\_REWARD\_FACTOR}}{\sqrt{\text{total balance}}*\texttt{BASE\_REWARDS\_PER\_EPOCH}}.
    \label{base reward}
\end{equation}

\paragraph{Validators and identities.} A \emph{validator} is an economic decision maker, such as a solo staker, staking pool, exchange, or liquid-staking operator, that may operate multiple validator identities.
A \emph{validator identity} is the protocol-facing key/effective-balance unit that enters the selection rule.
The protocol applies balance constraints to each identity separately.
Thus the  participation measure below is operational participation at the identity level.

\subsection{Model}
Ethereum staking can be modeled as an auction over a divisible reward pool. Validators commit ETH to earn protocol rewards from proposing blocks and performing attestation duties. Although individual block proposals are discrete, averaging over protocol randomness gives a divisible allocation of expected rewards. Under the expected-reward abstraction used here, each identity receives a share of aggregate rewards proportional to its effective stake. Stakes therefore act as bids in a \emph{proportional auction}~\citep{kelly1997charging}, while validators incur a cost of committing funds. Furthermore, the protocol imposes a minimum staking threshold of 32 ETH, which functions as a ``reserve price'' that determines eligibility for participation. This auction-theoretic abstraction allows us to analyze staking incentives using established tools from mechanism design and resource allocation theory.

Let $\mathbf s=(s_{i\ell})$ denote the effective-stake bids of the active validator identities, where $i\in\{1,\ldots,M\}$ indexes validators and $\ell$ indexes their identities.
Validator $i$ has staking budget $B_i \in (0, \infty)$ and operates $L_i$ identities with total $Q_i=\sum_{\ell=1}^{L_i}s_{i\ell}\le B_i$.
Write $\mathbf Q=(Q_1,\ldots,Q_M)$ and
\[
 Q_{-i}=\sum_{j\ne i}Q_j,\qquad Q=\sum_{j=1}^{M}Q_j=Q_i+Q_{-i}.
\]

In the static one-identity setting, $Q_i=s_i$, while the dynamic setting allows several identities per validator.
Identity-control relationships, budgets, cost parameters, and the protocol design are common knowledge among validators.
Economic validators act noncooperatively: deviations are unilateral, and cross-validator coalitions are excluded.

\begin{assumption}[Validator-share admissibility]\label{assumption 1}
For each active economic validator $i$, the total stake share satisfies
\[
 \frac{Q_i}{Q}
 =\frac{\sum_{\ell=1}^{L_i}s_{i\ell}}{\sum_{j=1}^{M}Q_j}
 <\frac23.
\]
\end{assumption}

Assumption~\ref{assumption 1} delimits the decentralized protocol interpretation of an outcome.
If one economic validator controls at least two-thirds of total effective stake across its identities, it can unilaterally finalize its preferred valid chain without votes from other validators~\citep{buterin2020combining,ethereumdocs}. 
We treat this regime as a breakdown of the decentralized system modeled here, in which the proportional-allocation game and its utility specification no longer apply. 

The rest of the model describes the induced staking interaction.
The protocol design fixes eligibility and reward rules, while validators jointly choose their identities' effective stakes subject to the identity bounds and validator budgets.
We formalize this induced interaction as a bidding game.

\begin{definition}[Bidding game]\label{def:bidding-game}
For a fixed protocol design $\mathcal D$ and feasible identity counts $(L_i)_{i=1}^M$, each active validator $i$ jointly chooses its identities' stakes subject to its budget and the per-identity balance limits.
The induced outcome is
\[
\left(\{B_i\}_{i=1}^{M},\{Q_i\}_{i=1}^{M},M,\mathcal D\right).
\]
\end{definition}

\subsubsection{Designer Objective}

The protocol designer selects design parameters \(\mathcal D\) to move the observable security--participation frontier. The two coordinates are total effective stake, which can raise the economic cost of corruption, and the operational visible-identity count, which records the participation margin available to the protocol.

\begin{definition}[Security]\label{security}
   The security coordinate, denoted by \(S\), is the total effective stake supplied by visible identities:
   \(
   S = \sum_{i=1}^{M} Q_i.
   \)
\end{definition}

\begin{definition}[Operational participation] \label{decentralization}
   The operational visible-identity count, denoted by \(D_{\mathrm{op}}\), is the number of active protocol-facing identities after any identity restructuring and stake re-equilibration:
   \(
   D_{\mathrm{op}} = \sum_{i=1}^{M} L_i.
   \)
   Identities controlled by the same economic validator are counted separately.
\end{definition}

The designer chooses protocol parameters to maximize $\Phi(S,D_{\mathrm{op}})$, which is increasing in both arguments and evaluated at the induced validator equilibrium.
Here, $D_{\mathrm{op}}$ captures the extensive margin of protocol-facing participation and it indicates operational decentralization without identifying independent ownership or Sybil resistance.

We focus on visible identities because the protocol records their keys and effective balances and applies the reserve and cap at this level~\citep{eip7251}, whereas common economic control need not be identifiable from on-chain records.
In the dynamic game (Section~\ref{dynamic section}), the population of potential economic validators is fixed, but entry and exit can change its active subset.
Counting active economic validators would capture these entry and exit decisions but would miss identity splits and merges within an unchanged active set.
The count $D_{\mathrm{op}}$ captures both types of adjustment after stake re-equilibration.
It therefore represents the identity-level dimension of operational decentralization studied here: the breadth of active protocol-facing participation.

\subsubsection{Validator Payoff} 

Under an expected-reward interpretation, we aggregate protocol rewards over a fixed accounting period. To isolate the economic incentives of staking, we focus on validators that jointly choose their identities' effective-stake bids while abstracting away complex adversarial vectors such as selfish mining, latency exploitation, eclipse attacks, and restaking mechanics.

Although validator duties recur, we model stake choice for a fixed identity structure as a \textbf{one-shot game}.
Validators simultaneously commit stake for a representative period with fixed protocol parameters, budgets, and cost schedules.
Taking expectations over proposer and attestor randomness gives a single payoff for each stake profile.
Thus, the repeated process can be effectively analyzed as a single-stage optimization.

Formally, validator $i$ chooses its total stake $Q_i$, taking opponents' total $Q_{-i}$ as given in a unilateral quantity decision.
Let $C_i(Q_i)$ denote its cost over the same period.
With $Q=Q_i+Q_{-i}>0$ and $\alpha>0$, the square-root issuance rule gives
\begin{equation}\label{utility1}
 V_i(Q_i;Q_{-i})=
 \underbrace{\alpha\sqrt Q}_{\text{aggregate reward}}
 \underbrace{\frac{Q_i}{Q}}_{\text{reward share}}
 -C_i(Q_i)
 =\frac{\alpha Q_i}{\sqrt{Q_i+Q_{-i}}}-C_i(Q_i).
\end{equation}

We specify the validator cost as
\begin{equation}\label{eq:owner_cost}
 C_i(Q_i)=c_iQ_i+\frac{\kappa_i}{2}Q_i^2,
 \qquad c_i>0,\quad\kappa_i\ge0.
\end{equation}

The linear term captures foregone returns or incremental expenses, rather than a loss of stake principal.
The quadratic term captures increasing funding costs as staking reduces the liquid buffer available during withdrawal delays.\footnote{\url{https://ethereum.org/staking/withdrawals/}.}
For example, let $W_i>0$ with $W_i\ge B_i$ be liquid resources and let a funding need be uniform on $[0,W_i]$.
Committing $Q_i$ leaves a liquid buffer of $W_i-Q_i$.
A funding need above this buffer creates a shortfall that must be financed or leaves a payment unmet.

For $Q_i>0$, the probability of a shortfall is $Q_i/W_i$, and its average size conditional on occurring is $Q_i/2$.
With loss $h_i\ge0$ per unit of shortfall, the expected funding cost is
\[
\begin{aligned}
 \mathcal C_i(Q_i)
 &=h_i\,\frac{Q_i}{W_i}\,\frac{Q_i}{2}
 =\frac{h_i}{2W_i}Q_i^2,
 \qquad \kappa_i=\frac{h_i}{W_i}.
\end{aligned}
\]

Thus a larger commitment increases both the probability and the size of a funding shortfall.
This gives the quadratic term in Equation~\eqref{eq:owner_cost} and links its coefficient to liquidity resources and the unit shortfall loss.
All identities operated by validator $i$ share the same resources and funding need, so their cost depends on the total $Q_i=\sum_\ell s_{i\ell}$.
Splitting an unchanged total across more identities therefore preserves the funding cost.
Appendix~\ref{appendix:cost_foundation} gives the general convex-cost derivation and the joint-cost expansion across identities.

An inactive validator has $L_i=Q_i=0$ and payoff zero, including at the all-inactive profile.
For mathematical characterization, the formula specifies payoffs on all resource-feasible profiles.
Section~\ref{non-separable} restricts validators to one identity, while Section~\ref{dynamic section} changes identity counts and re-solves the same game.

\section{Static Setting}\label{non-separable}

This section’s analysis extends the one-shot auction framework of~\citet{maheswaran2003nash} to model Ethereum’s PoS staking as a static game in which a fixed set of active validators simultaneously selects its bids.
We begin by formulating each agent’s optimization problem under reserve price and bid limit constraints.
Using first- and second-order conditions, we derive each validator’s optimal bidding strategy and prove existence and uniqueness of a Nash equilibrium as \citet{maheswaran2003nash}.

\subsection {Constrained Optimization for Validators} 

We introduce several constraints to formalize the model. 
In the static setting, each validator operates at most one identity, with $Q_i=s_i$ when active.
The assignment of identities to validators, budgets, costs, and protocol parameters are common knowledge before bidding.
Validator $i$ has budget $B_i$.
An agent is \emph{active} if it submits a valid bid $Q_i\in[\underline r,B_i]$.
Let $E(\underline r)=\{i:B_i\ge \underline r\}$ be the budget-eligible set and fix a nonempty participating subset $A\subseteq E(\underline r)$.
Validators outside $A$ supply zero in this conditional game, with entry and exit reserved for Section~\ref{dynamic section}.
For each active validator $i\in A$, the static problem is:

\begin{equation}\label{optimziation}
 \max_{Q_i\in[\underline r,B_i]} V_i(Q_i;Q_{-i})
 =\frac{\alpha Q_i}{\sqrt{Q_i+Q_{-i}}}-c_i Q_i-\frac{\kappa_i}{2}Q_i^2,
\end{equation}
where $Q_{-i} = \sum_{j \neq i} Q_j$ denotes the aggregate bids from all agents other than agent $i$.

The first-order condition for an interior optimum is
\begin{equation}
    \frac{\partial V_i}{\partial Q_i}
    =\frac{\alpha(Q_{-i}+Q_i/2)}{(Q_i+Q_{-i})^{3/2}}-c_i-\kappa_i Q_i,\label{first order}
\end{equation}
with second derivative
\begin{equation}
    \frac{\partial^2 V_i}{\partial Q_i^2}
    =-\frac{\alpha(Q_{-i}+Q_i/4)}{(Q_i+Q_{-i})^{5/2}}-\kappa_i<0.\label{second order}
\end{equation}

For the active eligible set \(A\), Each participating owner must also satisfy IR condition:
\begin{equation} \label{eq:static_IR}
 V_i(Q_i;Q_{-i})\ge0
 \quad\Longleftrightarrow\quad
 \frac{\alpha}{\sqrt Q}\ge c_i+\frac{\kappa_i}{2}Q_i,
 \qquad i\in A.
\end{equation}

\begin{proposition}\label{prop opt}
For the optimization problem in Equation~\eqref{optimziation}, the constrained best response is
\begin{equation} \label{eq:static_optimal_bid}
    Q_i^{\mathrm{BR}}(Q_{-i})=\max\{\underline r,\min\{Q_i^\dagger,B_i\}\},
\end{equation}
where $Q_i^\dagger$ is the unique maximizer of $V_i(Q_i;Q_{-i})$ over $Q_i\ge0$.
\end{proposition}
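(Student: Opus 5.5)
The plan is to reduce the constrained problem to maximizing a strictly concave function of one variable over an interval, and then to read off the best response as a projection. Fix $Q_{-i}$ and regard $V_i(\cdot;Q_{-i})$ as a function of $Q_i$ on $[0,\infty)$, assuming $Q_{-i}>0$ or $Q_i>0$ so that $Q>0$. By Equation~\eqref{second order}, $\partial^2V_i/\partial Q_i^2<0$ wherever $Q>0$. Hence $V_i$ is strictly concave, and $\partial V_i/\partial Q_i$ is strictly decreasing on the relevant domain.

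First I will establish that $Q_i^\dagger$ exists and is unique on $[0,\infty)$. As $Q_i\to\infty$, the first-order expression in Equation~\eqref{first order} tends to at most $-c_i<0$, because the reward term behaves like $\alpha/(2\sqrt{Q_i})\to0$. So $V_i\to-\infty$, using $c_i>0$, and the supremum is attained on a compact set. Strict concavity gives uniqueness of the maximizer. There are then two cases.
\begin{itemize}
\item If $\partial V_i/\partial Q_i|_{Q_i=0}=\alpha/\sqrt{Q_{-i}}-c_i>0$, then $Q_i^\dagger$ is the unique root of Equation~\eqref{first order}.
\item Otherwise $Q_i^\dagger=0$.
\end{itemize}
The degenerate case $Q_{-i}=0$ is handled directly. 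There $V_i=\alpha\sqrt{Q_i}-c_iQ_i-\tfrac{\kappa_i}{2}Q_i^2$, which is strictly concave on $(0,\infty)$ with derivative $+\infty$ at $0^+$, so a unique interior maximizer exists.

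Second, I will derive the constrained optimum from the unconstrained one using only concavity in one dimension. A strictly concave function is strictly increasing on $[0,Q_i^\dagger]$ and strictly decreasing on $[Q_i^\dagger,\infty)$. So its maximizer over an interval $[\underline r,B_i]$, which is nonempty because $i\in A\subseteq E(\underline r)$, is the projection of $Q_i^\dagger$ onto that interval. That projection equals $\underline r$ if $Q_i^\dagger<\underline r$, equals $B_i$ if $Q_i^\dagger>B_i$, and equals $Q_i^\dagger$ otherwise. Since $\underline r\le B_i$, this projection coincides with $\max\{\underline r,\min\{Q_i^\dagger,B_i\}\}$, which is Equation~\eqref{eq:static_optimal_bid}. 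I will check the one subtle ordering explicitly: when $Q_i^\dagger<\underline r$, we get $\min\{Q_i^\dagger,B_i\}=Q_i^\dagger$ and then the $\max$ returns $\underline r$.

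The only real obstacle is the boundary behaviour at $Q=0$ and the limit as $Q_i\to\infty$, which together guarantee that $Q_i^\dagger$ is well defined. The rest is the standard projection property of strictly concave univariate functions. The participation (IR) condition in Equation~\eqref{eq:static_IR} is not needed, since the proposition concerns the best response of an active participant in $A$.
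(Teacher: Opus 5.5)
Your proposal is correct and follows essentially the same route as the paper. The unconstrained maximizer $Q_i^\dagger$ exists and is unique by continuity, $V_i\to-\infty$ (using $c_i>0$) and strict concavity, with $Q_{-i}=0$ handled separately, and the best response is its projection onto $[\underline r,B_i]$; the paper states this last step with a Lagrangian and explicit KKT multipliers, but its case analysis uses the same monotonicity of $V_i$ on either side of $Q_i^\dagger$ that you invoke directly.
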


The strict concavity of the payoff makes the constrained optimum the projection of its unconstrained maximizer onto the feasible interval.
Appendix~\ref{section proof 1} gives the KKT conditions and the participation check.

The proposition has a direct design interpretation.
Lowering \(\underline r\) can add agents to the eligible set and can also reduce the minimum stake required of agents who were previously active.
Raising a validator's budget \(B_i\) weakly expands its feasible set, but it need not increase the chosen stake once the unconstrained optimum is already feasible.
The empirical budget distribution therefore matters for eligibility and for the amount of stake validators can supply.
 Validators with budgets below the reserve cannot  participate.
Those with budgets far above the cap may require identity restructuring before their additional capacity can enter effective stake.
Section~\ref{dynamic section} examines how identity restructuring changes the feasible stake choices.

\subsection{Equilibrium Characterization}\label{sec:price-demand}
Proposition~\ref{prop opt} solves one identity's constrained choice for fixed opponents' aggregate stake.
A Nash equilibrium requires these individual best responses to be mutually consistent.
Following the proportional-allocation representation in~\citet{maheswaran2003nash}, we express this consistency requirement through a candidate aggregate stake $Q>0$.
Write $x_i=Q_i/Q$ for the stake share of validator $i$'s single identity and define the unit reward $p(Q)=\alpha/\sqrt Q$.
Using $Q=Q_i+Q_{-i}$, an interior optimum satisfies
\[
 p(Q)\left(1-\frac{x_i}{2}\right)=c_i+\kappa_i Q x_i.
\]

The factor $1-x_i/2$ retains the validator's effect on aggregate rewards.
Solving for the share and applying the bid constraints gives the following representation.

\begin{proposition}[Demand representation]\label{prop:price_demand}
For each active owner $i\in A$ and $Q>0$, the constrained demand share is
\begin{equation}\label{eq:static_demand}
 d_i(Q)=\frac{1}{Q}\operatorname{proj}_{[\underline r,B_i]}
 \left(Q\frac{p(Q)-c_i}{p(Q)/2+\kappa_i Q}\right),
\end{equation}
where $\operatorname{proj}_{[a,b]}(y)=\max\{a,\min\{y,b\}\}$.
Each $d_i$ is continuous and nonincreasing.
A feasible bid vector is a conditional quantity equilibrium exactly when $Q_i=Q d_i(Q)$ and $\sum_{i\in A}d_i(Q)=1$.
\end{proposition}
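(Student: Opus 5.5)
The plan is to reparametrize each validator's first-order analysis from Proposition~\ref{prop opt} by the aggregate $Q$ rather than by the opponents' total $Q_{-i}$. The key claim is a pointwise equivalence. For $Q>0$ and $Q_i\in[\underline r,B_i]$ with $Q_i<Q$, set $Q_{-i}:=Q-Q_i$. Then $Q_i$ is a best response to $Q_{-i}$ if and only if
\[
Q_i=\operatorname{proj}_{[\underline r,B_i]}\bigl(y_i(Q)\bigr),\qquad y_i(Q):=Q\frac{p(Q)-c_i}{p(Q)/2+\kappa_i Q}.
\]
Once this equivalence holds, the three assertions of the proposition follow with little extra work.

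\emph{Step 1 (equivalence).} By \eqref{second order}, $V_i(\cdot\,;Q_{-i})$ is strictly concave. Optimality of $Q_i$ on $[\underline r,B_i]$ is therefore characterized by the KKT sign conditions on the marginal \eqref{first order} at $Q_i$:
\begin{itemize}
\item it vanishes if $Q_i$ is interior;
\item it is $\le0$ if $Q_i=\underline r$;
\item it is $\ge0$ if $Q_i=B_i$.
\end{itemize}
Substitute $Q_{-i}=Q-Q_i$. The marginal then becomes $\varphi_i(Q_i):=p(Q)\bigl(1-\tfrac{Q_i}{2Q}\bigr)-c_i-\kappa_iQ_i$, which is affine and strictly decreasing in $Q_i$ for fixed $Q$, with unique root $y_i(Q)$. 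So the three sign conditions are equivalent to $Q_i=y_i(Q)\in(\underline r,B_i)$, $y_i(Q)\le\underline r$, and $y_i(Q)\ge B_i$, respectively. These three cases are exactly $Q_i=\operatorname{proj}_{[\underline r,B_i]}(y_i(Q))$. The case $p(Q)\le c_i$ gives $y_i\le0$ and is covered by the lower projection.

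I expect this step to be the main obstacle. The subtlety is that $\varphi_i$ is \emph{not} the marginal along a fixed $Q_{-i}$; it only coincides with it at the single point consistent with the total $Q$. The argument must therefore apply the KKT sign conditions at that point and not treat $\varphi_i$ as a best-response map.

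\emph{Step 2 (continuity and monotonicity).} Continuity of $d_i$ is immediate, since $p$ is continuous on $(0,\infty)$, the projection is continuous, and we divide by $Q>0$. For monotonicity, write
\[
d_i(Q)=\operatorname{proj}_{[\underline r/Q,\;B_i/Q]}\bigl(x_i^+(Q)\bigr),\qquad x_i^+:=\max\{x_i,0\},\quad x_i(Q)=\frac{p-c_i}{p/2+\kappa_iQ}.
\]
Replacing $x_i$ by $x_i^+$ changes nothing, because $\underline r/Q>0$. Multiplying numerator and denominator of $x_i$ by $\sqrt Q$ gives $x_i=(\alpha-c_i\sqrt Q)/(\alpha/2+\kappa_iQ^{3/2})$. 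While this is positive, the numerator is positive and decreasing and the denominator is positive and increasing, so $x_i^+$ is nonincreasing. The bounds $\underline r/Q$ and $B_i/Q$ are also decreasing. Since $\max\{a,\min\{y,b\}\}$ is monotone in each argument, $d_i$ is nonincreasing.

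\emph{Step 3 (equilibrium characterization).} Suppose first that $(Q_i)_{i\in A}$ is a conditional equilibrium. Set $Q=\sum_{i\in A} Q_i$. Each $Q_i$ is a best response to $Q-Q_i$, so Step 1 gives $Q_i=Qd_i(Q)$, and summing and dividing by $Q$ yields $\sum_{i\in A} d_i(Q)=1$. Conversely, suppose $Q>0$ satisfies $\sum_{i\in A} d_i(Q)=1$, and set $Q_i:=Qd_i(Q)\in[\underline r,B_i]$. These sum to $Q$, so $Q_{-i}=Q-Q_i$ for every $i$. Step 1 then shows that each $Q_i$ is a best response, so the profile is an equilibrium. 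The degenerate case $|A|=1$, where $Q_{-i}=0$, goes through the same computation, since \eqref{first order} remains valid for $Q_i>0$.
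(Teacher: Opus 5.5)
Your proposal is correct and follows essentially the same route as the paper. Your affine marginal $\varphi_i(Q_i)$ at fixed $Q$ is the paper's $\Psi_i(Q,x)$ written in quantity rather than share coordinates, and you use it in the same way to match the KKT sign conditions to the projection; your monotonicity rewrite and your two-direction clearing argument also match the paper's. Two small points: when $\underline r=B_i$ the bid is forced, so your sign conditions should be replaced by feasibility alone (the paper treats this case separately, and the projection still returns that bid), and when $\kappa_i=0$ the denominator is only nondecreasing, which is still enough because the numerator is strictly decreasing.
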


The derivation and monotonicity proof are in Appendix~\ref{section proof 2}.

This demand representation follows from the strategic optimality conditions, rather than an additional behavioral assumption.
For any candidate total $Q$, the projection enforces the reserve and bid limit on the implied quantity, and division by $Q$ gives its share.
Let aggregate share demand be $D_A(Q)=\sum_{i\in A}d_i(Q)$.
Market clearing requires $D_A(Q)=1$, so the resulting bids add up to the same total used to compute them.

\begin{theorem}[Existence and uniqueness of static equilibrium]\label{thm: unique NE}
For every fixed protocol design with nonempty active set \(A\), there exists a unique aggregate stake $Q_A>0$ satisfying \(D_A(Q_A)=1\). The fixed-identity proportional-allocation game then has a unique Nash equilibrium.
\end{theorem}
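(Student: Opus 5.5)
We reduce the Nash equilibrium problem to the scalar market-clearing equation $D_A(Q)=1$ via Proposition~\ref{prop:price_demand}. That proposition already establishes that a feasible bid vector is an equilibrium exactly when $Q_i=Qd_i(Q)$ for all $i\in A$ and $D_A(Q)=1$. Existence and uniqueness of the Nash equilibrium therefore follow from existence and uniqueness of a root of $D_A(Q)=1$ on $(0,\infty)$, because the root determines every $Q_i$. Here every $d_i$ is computed from the projected best response of Proposition~\ref{prop opt}, which is single-valued by the strict concavity in \eqref{second order}.

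\emph{Existence.} Each $d_i$ is continuous by Proposition~\ref{prop:price_demand}, so $D_A$ is continuous on $(0,\infty)$ and the intermediate value theorem applies once we control the endpoints.

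As $Q\downarrow 0$, the projection is at least $\underline r>0$, so $d_i(Q)\ge \underline r/Q\to\infty$. Hence $D_A(Q)>1$ for small $Q$.

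As $Q\to\infty$, the projection is at most $B_i$, so $d_i(Q)\le B_i/Q\to 0$. Hence $D_A(Q)<1$ for large $Q$.

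A root $Q_A$ therefore exists. We also check that it lies in the range where all bids sum consistently: $\sum_i Qd_i(Q)=Q$ holds by construction at the root, and the feasibility condition $\underline r\le Q_i\le B_i$ is enforced by the projection.

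\emph{Uniqueness.} This is the main obstacle, because Proposition~\ref{prop:price_demand} only gives that each $d_i$ is \emph{nonincreasing}, and a flat stretch of $D_A$ at level $1$ would break uniqueness. The plan is to show that $D_A$ is strictly decreasing wherever it is finite and positive. Write $y_i(Q)=Q\frac{p(Q)-c_i}{p(Q)/2+\kappa_iQ}$ for the unprojected quantity and split into cases according to which branch of the projection is active.

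\begin{itemize}
\item On the branches where $y_i$ is clipped to $\underline r$ or to $B_i$, the share equals $\underline r/Q$ or $B_i/Q$. Both are strictly decreasing in $Q$.
\item On the interior branch, $d_i=y_i/Q=\frac{p-c_i}{p/2+\kappa_iQ}$. Its numerator is strictly decreasing in $Q$, since $p=\alpha/\sqrt Q$. Its denominator satisfies $p/2+\kappa_iQ$, and we verify monotonicity by writing $d_i=\frac{2(\alpha-c_i\sqrt Q)}{\alpha+2\kappa_iQ^{3/2}}$. While $d_i>0$, the numerator is positive and strictly decreasing and the denominator is positive and nondecreasing, so $d_i$ is strictly decreasing.
\end{itemize}

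The interior branch is only relevant when $y_i\ge \underline r>0$, so $d_i>0$ there. Thus every $d_i$ is strictly decreasing on $(0,\infty)$ under every branch, and because each $d_i$ is continuous, it is strictly decreasing globally. It follows that $D_A$ is strictly decreasing, so the root $Q_A$ is unique.

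\emph{Equilibrium.} Setting $Q_i^*=Q_Ad_i(Q_A)$ gives the unique Nash equilibrium. Uniqueness at the level of the game holds because any Nash equilibrium has total $Q=\sum_i Q_i>0$, which must solve $D_A(Q)=1$ by the equivalence in Proposition~\ref{prop:price_demand}; so its total is $Q_A$ and its bids are the $Q_i^*$ above. The IR condition \eqref{eq:static_IR} and Assumption~\ref{assumption 1} are properties of the conditional game on $A$, not conditions for the existence of equilibrium. We treat them only as interpretive checks and do not use them in the argument.
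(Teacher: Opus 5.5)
Your proof is correct, but the uniqueness step takes a different route from the paper's. You show that each $d_i$, and hence $D_A$, is strictly decreasing on all of $(0,\infty)$, using the rewrite $d_i^0(Q)=2(\alpha-c_i\sqrt Q)/(\alpha+2\kappa_iQ^{3/2})$ on the interior branch. The paper instead argues locally at a clearing total $Q^{(1)}$. Setting $x=d_i(Q^{(1)})$, clearing gives $x\le1$, which makes $\Psi_i(Q,x)=p(Q)(1-x/2)-c_i-\kappa_iQx$ strictly decreasing in $Q$. Combined with strict decrease in $x$, this forces every share to fall strictly at any $Q^{(2)}>Q^{(1)}$, so no second root exists. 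Your version gives a stronger statement, a globally strictly decreasing clearing function, and it carries over verbatim to arbitrary intervals $0<a_i\le b_i$. The paper's version never needs the closed form, only sign properties of $\Psi_i$. For existence, the paper evaluates $D_A$ at $\sum_i a_i$ and $\sum_i b_i$ rather than taking limits. That also yields the bracket $|A|\underline r\le Q_A\le\sum_{i\in A}B_i$ cited later in Corollary~\ref{thm:reset}, which your bounds $\underline r/Q\le d_i\le B_i/Q$ give at once. The one step to tighten is the passage from branchwise to global strict decrease. One option is to note that $y_i=\underline r$ and $y_i=B_i$ are cubic equations in $\sqrt Q$, so only finitely many branch switches need chaining. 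Alternatively, write $d_i=\max\{\underline r/Q,\min\{d_i^0,B_i/Q\}\}$: the first term is strictly decreasing everywhere, and the second is strictly decreasing wherever it is positive.
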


\begin{proof}[Sketch]
Aggregate demand is continuous and brackets one at the sums of the lower and upper quantity bounds.
At a clearing total, increasing the proposed aggregate strictly lowers every validator's demand share, so no second clearing total exists.
The demand representation then gives a unique quantity vector.
Appendix~\ref{section proof theorem unique} gives the full argument for arbitrary positive quantity intervals.\qedhere
\end{proof}

The existence and uniqueness of the Nash equilibrium established above provide a solid foundation for analyzing strategic behavior in the static setting. Since each agent has a well-defined best response and the resulting strategy profile is stable and predictable, we can treat different initial representations that lead to the same equilibrium outcome as equivalent:

\begin{lemma}\label{Equivalence of bidding game}
The two bidding games
\[
\begin{gathered}
 \left(\{B_i\}_{i=1}^M,\{Q_i^*\}_{i=1}^M,M,\mathcal D\right)\quad\text{and}\\
 \left(\{Q_i^*\}_{i=1}^M,\{Q_i^*\}_{i=1}^M,M,\mathcal D\right)
\end{gathered}
\]
are equivalent as they yield the same result.
\end{lemma}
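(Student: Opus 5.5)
We will show that replacing each budget $B_i$ by the equilibrium stake $Q_i^*$ from Theorem~\ref{thm: unique NE} yields a game with the same unique equilibrium. Here $Q^*=\sum_i Q_i^*$ is the clearing total $Q_A$ of the original game with budgets $\{B_i\}$ and the same active set $A$ and design $\mathcal D$. Both games share the same validators, the same costs and the same reserve $\underline r$. In the second game the feasible interval for validator $i$ is $[\underline r,Q_i^*]$, which is nonempty because $Q_i^*\ge\underline r$. Since $Q_i^*\le B_i$, this interval is contained in $[\underline r,B_i]$. The problem is therefore a well-defined instance of the bidding game of Definition~\ref{def:bidding-game}, with budgets weakly tightened. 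Theorem~\ref{thm: unique NE} applies to it directly and gives a unique equilibrium. It remains only to identify that equilibrium with $\mathbf Q^*$.

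\textbf{Step 1: $\mathbf Q^*$ is an equilibrium of the restricted game.} This is a restriction argument. Fix $i$ and opponents' total $Q^*_{-i}$. In the original game $Q_i^*$ maximizes $V_i(\cdot;Q^*_{-i})$ over $[\underline r,B_i]$. Because $[\underline r,Q_i^*]\subseteq[\underline r,B_i]$ and $Q_i^*$ lies in the smaller set, it also maximizes over the smaller set. This is the same as saying that the projection formula of Proposition~\ref{prop opt} returns $Q_i^*$ when $B_i$ is replaced by $Q_i^*$.

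An equivalent route uses the demand representation of Proposition~\ref{prop:price_demand}. At $Q^*$ the unprojected quantity $y_i=Q^*(p(Q^*)-c_i)/(p(Q^*)/2+\kappa_iQ^*)$ satisfies $\operatorname{proj}_{[\underline r,B_i]}(y_i)=Q_i^*$. Two cases cover all possibilities.
\begin{itemize}
\item If $y_i\le Q_i^*$, then $Q_i^*=\max\{\underline r,y_i\}$, and projecting onto $[\underline r,Q_i^*]$ returns the same value.
\item If $y_i>Q_i^*$, then necessarily $Q_i^*=B_i$, and projecting onto $[\underline r,B_i]$ again returns $Q_i^*$.
\end{itemize}
Hence the new demands satisfy $\tilde d_i(Q^*)=d_i(Q^*)$ for every $i$, so $\sum_i\tilde d_i(Q^*)=1$. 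By Proposition~\ref{prop:price_demand}, $\mathbf Q^*$ is a conditional quantity equilibrium of the second game.

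\textbf{Step 2: uniqueness and identical outcomes.} Theorem~\ref{thm: unique NE}, applied to the second game, says its clearing total is unique. That total must therefore be $Q^*$, and its unique equilibrium vector is $\mathbf Q^*$. The two games then produce the same stakes, the same total $S$, the same identity count $D_{\mathrm{op}}$ and, through \eqref{utility1}, the same payoffs. This is the sense in which the lemma calls them equivalent.

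The main obstacle is not technical; it is fixing what ``equivalent'' means. I would state it explicitly as equality of the unique equilibrium outcome, meaning the stakes, payoffs and the coordinates $(S,D_{\mathrm{op}})$, rather than equality of the strategy sets. Off-equilibrium best responses do differ between the games: an opponent total that would push $i$ above $Q_i^*$ is handled differently. The argument must therefore rest on uniqueness in Theorem~\ref{thm: unique NE}, not on any equivalence of strategy sets.

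A minor point is that the individual-rationality condition \eqref{eq:static_IR} must be checked in the second game. It holds automatically, because payoffs at $\mathbf Q^*$ are unchanged.
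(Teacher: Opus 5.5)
Your proposal is correct and takes essentially the paper's route. The paper justifies this lemma by the feasible-set observation, restated in its PoW appendix, that lowering each upper bound from $B_i$ to $Q_i^*$ removes possible deviations without removing the equilibrium, and then uses the uniqueness in Theorem~\ref{thm: unique NE} to identify the restricted game's equilibrium with $\mathbf Q^*$; your demand-projection check is a valid extra confirmation, and reading ``equivalent'' as equality of equilibrium outcomes matches the paper's own remark.
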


This lemma serves strictly as an outcome-equivalence statement for a fixed set of identities. It does not imply that unused bid capacity becomes irrelevant once identity restructuring is permitted. In fact, Section~\ref{dynamic section} investigates exactly how this unused capacity matters when an economic agent can add or merge visible identities.

In practice, an iteration adapted from~\citet{maheswaran2003nash} can seek the conditional Nash equilibrium. Appendix~\ref{appendix:static algorithm} gives the update equation and the corresponding algorithm. The same proportional-allocation backbone also gives a PoW boundary case: if hash power replaces effective stake and the block reward is fixed, the payoff retains a proportional resource-share structure.
See Appendix~\ref{pow section} for more details.

\section{Dynamic Setting}\label{dynamic section}

The static analysis fixes participation and the visible identity structure.
We now allow validators to choose how many identities to operate.
Each identity change is evaluated after all validators re-optimize their stakes.
We first use a symmetric example without a cap to show that a split can increase total stake and the active identity count yet reduce the splitting validator's payoff.
This conflict between private incentives and the protocol's objectives motivates our analysis of a per-identity effective-balance cap.
We examine when a cap makes the split profitable and improves both protocol coordinates.
We then formulate the designer's optimization over the resulting equilibria.

\subsection{Identity Restructuring}\label{subsec:reset}
The first task is to make the identity problem visible.
We first study identity restructuring under the uncapped design $\mathcal D=(\alpha,\underline r)$.
The assignment of identities to validators is common knowledge among validators, but the protocol applies the activation threshold to each visible identity.
The validator compares post-reset payoffs, while the protocol cares about the resulting pair $(S,D_{\mathrm{op}})$.
This possible mismatch motivates our analysis of the upper-bound rule below in Section \ref{solution section}.

Let $\mathcal E=\{1,\ldots,M\}$ denote potential validators, with budgets $B_i$ and feasible identity counts
\begin{equation}\label{eq:count_menu}
 L_i\in\mathcal L_i(\underline r)=\{0,1,\ldots,\lfloor B_i/\underline r\rfloor\}.
\end{equation}

The choice $L_i=0$ gives inactivity and zero payoff.
For this fixed design, the validator response has two stages: validators first simultaneously choose identity counts, then observe $\bm L$ and simultaneously choose stakes conditional on those counts.
Each active validator jointly chooses its identities' stakes to maximize $V_i(Q_i;Q_{-i})$, subject to $L_i\underline r\le Q_i\le B_i$.
Splitting a fixed total among identities leaves the validator's reward and cost unchanged.
However, each additional identity raises its minimum total stake, so a split can change the quantity equilibrium.
Theorem~\ref{thm: unique NE} extends to these validator-specific intervals and yields a unique quantity equilibrium $\mathbf Q^*(\bm L)$ for every nonzero feasible count profile.
\subsubsection{Continuous Allocation and the Quantity Reset}\label{appendix:quantity_reset}
\begin{lemma}[Reduction to validator quantities]\label{lem:reduction}
For each feasible positive count $k\in\mathcal L_i(\underline r)$, define
\[
 b_i^k=\begin{cases}
 B_i, & \text{without a cap},\\
 \min\{B_i,k\bar r\}, & \text{with a cap}.
 \end{cases}
\]

For a validator operating $L_i>0$ identities, the feasible totals are exactly $a_i=L_i\underline r\le Q_i\le b_i^{L_i}$.
Every allocation of the same total has the same payoff, and equal allocation $s_{i\ell}=Q_i/L_i$ is feasible.
The continuation game is therefore
\[
 \max_{Q_i\in[a_i,b_i^{L_i}]}
 \left\{\frac{\alpha Q_i}{\sqrt{Q_i+Q_{-i}}}-C_i(Q_i)\right\}.
\]
\end{lemma}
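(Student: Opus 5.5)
The plan is to prove the three claims in Lemma~\ref{lem:reduction} in order. First I characterize the set of achievable totals. Then I show that the payoff depends only on the total. Finally I conclude that the continuation game reduces to a scalar choice of $Q_i$.

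\textbf{Achievable totals.} Fix $L_i=k>0$. Each identity satisfies $\underline r\le s_{i\ell}$, and, with a cap, also $s_{i\ell}\le\bar r$; the budget requires $\sum_\ell s_{i\ell}\le B_i$. Summing the per-identity bounds gives $k\underline r\le Q_i\le k\bar r$, and combining with the budget gives $Q_i\le b_i^k$. So every feasible allocation has its total in $[a_i,b_i^k]$.

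For the converse, take any $Q_i$ in this interval and set $s_{i\ell}=Q_i/k$. Then $s_{i\ell}\ge\underline r$ because $Q_i\ge k\underline r$. With a cap, $s_{i\ell}\le\bar r$ because $Q_i\le k\bar r$. The budget holds because $Q_i\le B_i$. This shows both that the feasible totals are exactly $[a_i,b_i^k]$ and that equal allocation is feasible.

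The interval is nonempty because $k\in\mathcal L_i(\underline r)$ gives $k\underline r\le B_i$. With a cap, we also need $\underline r\le\bar r$, which I take as part of any admissible design. I expect this nonemptiness point to be the only slightly delicate step, since it depends on that implicit design restriction.

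\textbf{Payoff depends only on the total.} Validator $i$'s reward is $\alpha\sqrt{Q}\cdot\sum_\ell s_{i\ell}/Q=\alpha Q_i/\sqrt{Q}$, and $Q=Q_i+Q_{-i}$. The cost $C_i(Q_i)$ depends on the total by the joint-cost specification in Equation~\eqref{eq:owner_cost}. Opponents' payoffs depend on $Q_{-i}$ and their own totals, so they too are unaffected by how $i$ splits $Q_i$ across its identities.

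\textbf{Reduction.} Because of these two facts, the map from identity allocations to totals is onto $[a_i,b_i^{L_i}]$ and payoff-invariant within each fiber. Choosing an allocation is therefore equivalent to choosing $Q_i$ in the interval and maximizing the displayed objective. Best responses and equilibria of the continuation game thus correspond one-to-one, up to within-fiber allocation, with those of the scalar game. This is what allows Theorem~\ref{thm: unique NE} to be applied with validator-specific intervals.
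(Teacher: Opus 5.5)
Your proposal is correct and follows essentially the same argument as the paper: you get necessity by summing the per-identity reserve bounds (and cap bounds) together with the budget, sufficiency via equal allocation $Q_i/L_i$, and payoff invariance because both reward and cost depend on the allocation only through $Q_i$. The restriction $\bar r\ge\underline r$ you use for nonemptiness is not an implicit assumption: the paper imposes it explicitly when it introduces the cap, and its proof invokes it in the same way.
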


\begin{corollary}[Single-valued validator quantity reset]\label{thm:reset}
Every feasible non\-zero count profile induces a unique equilibrium in validator totals and payoffs.
Define, for a feasible positive count $L$ and $z>0$,
\begin{equation}\label{eq:quadratic_reset}
 Q_{i,L}(z)=\operatorname{proj}_{[L\underline r,\,b_i^L]}
 \frac{z[p(z)-c_i]}{p(z)/2+\kappa_i z}.
\end{equation}

The equilibrium total is the unique solution of $\sum_{i:L_i>0}Q_{i,L_i}(z)=z$ between the sums of the lower and upper bounds.
Active quantities are $Q_i^*(\bm L)=Q_{i,L_i}(z)$.
The all-zero profile has zero quantities and payoffs.
\end{corollary}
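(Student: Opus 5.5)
The plan is to reduce the corollary to the static machinery of Section~\ref{non-separable} through Lemma~\ref{lem:reduction}, and then treat the all-zero profile separately.

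\textbf{Reduction.} Fix a feasible nonzero count profile $\bm L$ and let $A=\{i:L_i>0\}$, which is nonempty. By Lemma~\ref{lem:reduction}, the stake-stage continuation game is a one-dimensional game: each $i\in A$ chooses a total $Q_i\in[a_i,b_i^{L_i}]$ with payoff $V_i(Q_i;Q_{-i})$. Validators with $L_i=0$ contribute zero stake and receive zero payoff. Feasibility of $L_i$ gives $L_i\underline r\le B_i$. With a cap we also have $L_i\underline r\le L_i\bar r$, using $\underline r\le\bar r$. Hence each interval $[a_i,b_i^{L_i}]$ is nonempty, with positive lower endpoint.

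\textbf{Best responses and demand.} The second derivative \eqref{second order} does not depend on the interval, so $V_i$ is strictly concave in $Q_i$ on $Q_i>0$. Proposition~\ref{prop opt} therefore holds with $[\underline r,B_i]$ replaced by $[a_i,b_i^{L_i}]$: the best response is the projection of the unconstrained maximizer onto the interval. I would rewrite the interior first-order condition in share form, $p(z)(1-x_i/2)=c_i+\kappa_i z x_i$, at a candidate aggregate $z$. Solving for $zx_i$ gives exactly the inner expression in \eqref{eq:quadratic_reset}, and projecting it gives $Q_{i,L_i}(z)$. This is Proposition~\ref{prop:price_demand} with the validator-specific interval substituted. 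It yields two facts: $d_i(z)=Q_{i,L_i}(z)/z$ is continuous and nonincreasing, and a profile is a Nash equilibrium iff $Q_i=Q_{i,L_i}(z)$ for all $i\in A$ with $\sum_{i\in A}Q_{i,L_i}(z)=z$.

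\textbf{Existence and uniqueness.} I would apply Theorem~\ref{thm: unique NE}, whose appendix proof is stated for arbitrary positive quantity intervals. It gives a unique $z^*$ between $\sum_{i\in A}a_i$ and $\sum_{i\in A}b_i^{L_i}$ solving the clearing equation. For existence, the function $z\mapsto\sum_{i\in A} Q_{i,L_i}(z)-z$ is continuous. It is nonnegative at $z=\sum a_i$, because each projection is at least $a_i$. It is nonpositive at $z=\sum b_i^{L_i}$. The intermediate value theorem then gives a root. For uniqueness, at any root $z^*$ every unclamped share strictly decreases in $z$. Clamped shares equal $a_i/z$ or $b_i/z$, which are also strictly decreasing. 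So $D_A$ crosses one only once.

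The quantities $Q_i^*(\bm L)=Q_{i,L_i}(z^*)$ are then uniquely determined, and payoffs $V_i(Q_i^*;z^*-Q_i^*)$ are uniquely determined as well. The within-validator split $s_{i\ell}$ need not be unique, but by Lemma~\ref{lem:reduction} payoffs do not depend on it. This is why the statement speaks of validator totals and payoffs.

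\textbf{Main obstacle and the zero profile.} The main obstacle is the strict-decrease step for the unclamped demand. One must check that $z\mapsto [p(z)-c_i]/[p(z)/2+\kappa_i z]$ is strictly decreasing where it is positive. This holds because the numerator decreases in $z$, the denominator's behaviour can be checked by differentiating, and the $\kappa_i z$ term increases. I would verify it directly from $p(z)=\alpha z^{-1/2}$. The all-zero profile is immediate: every validator is inactive, and by convention $Q_i=0$ and payoffs are zero.
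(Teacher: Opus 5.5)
Your proposal is correct and follows the paper's proof: you reduce via Lemma~\ref{lem:reduction}, check that each interval $[L_i\underline r,b_i^{L_i}]$ is positive and nonempty (using $\bar r\ge\underline r$), and invoke the arbitrary-interval proof of Theorem~\ref{thm: unique NE}, where solving the marginal equation gives \eqref{eq:quadratic_reset}. One step needs tidying: the denominator $p(z)/2+\kappa_i z$ need not be monotone, so establish monotonicity by dividing numerator and denominator by $p(z)$, as in the proof of Proposition~\ref{prop:price_demand} (or cite that proposition); this makes the numerator strictly decreasing and the denominator nondecreasing.
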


Appendix~\ref{appendix:quantity_reset_proof} proves the reduction and quantity-reset statements.

Write $Q_{-i}^*(\bm L)=\sum_{j\ne i}Q_j^*(\bm L)$.
Writing $V_i(\bm L)$ for the game payoff evaluated at this continuation, the payoff and gain from changing the count to $k$ are
\begin{align}
 V_i(\bm L)&=V_i\!\left(Q_i^*(\bm L);Q_{-i}^*(\bm L)\right),\label{eq:validator_post_reset_value}\\
 \Delta_i(k\mid\bm L)&=V_i(k,\bm L_{-i})-V_i(\bm L).\notag
\end{align}

\begin{definition}[Restructuring game]\label{def:restructuring-game}\label{def:split_merge}
For a fixed design $\mathcal D$, the restructuring game has economic validators $i\in\mathcal E$ as players, feasible identity-count sets $\mathcal L_i(\underline r)$ as action sets, and continuation payoffs $V_i(\bm L)$ from Equation~\eqref{eq:validator_post_reset_value}.
It is the finite first-stage game obtained by evaluating each count profile at its ensuing quantity equilibrium.
A feasible split, merge, entry, or exit from $L_i$ to $k$ is strictly profitable if $\Delta_i(k\mid\bm L)>0$.
\end{definition}

Each comparison holds other validators' identity counts fixed, while all validators re-optimize their stakes.
Equilibrium requires that no validator can gain from any feasible identity-count change, including entry and exit.
Appendix~\ref{appendix:dynamic algorithm} describes the numerical procedure.

\subsection{Symmetric Equilibria}\label{subsec:symmetric_participation}

With identical costs and constraints, a fixed identity structure yields a common equilibrium stake.
Fix $n\ge1$ active validators with common parameters $(B,c,\kappa)$, each operating a feasible positive count $1\le\ell\le\lfloor B/\underline r\rfloor$.
They face the same quantity constraints and cost $C(q)=cq+\kappa q^2/2$.

\begin{proposition}[Symmetric quantity characterization]\label{prop:symmetric}
Let $a=\ell \underline r$, $b=B$, and
\begin{equation}\label{eq:symmetric_marginal}
 F_{n,\ell}(q)=\frac{\alpha}{\sqrt{nq}}\left(1-\frac{1}{2n}\right)-c-\kappa q.
\end{equation}

The common quantity $q_{n,\ell}$ equals $a$ if $a=b$ or $F_{n,\ell}(a)\le0$, and equals $b$ if $F_{n,\ell}(b)\ge0$.
Otherwise it is the unique interior zero of $F_{n,\ell}$.
The continuation has total stake $Q=nq_{n,\ell}$, $D_{\mathrm{op}}=n\ell$, and validator payoff $V_{n,\ell}=p(nq_{n,\ell})q_{n,\ell}-C(q_{n,\ell})$.
\end{proposition}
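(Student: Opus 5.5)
We would derive the characterization from the uniqueness results already established. By Lemma~\ref{lem:reduction}, each of the $n$ validators with $\ell$ identities (no cap, so $b=B$) solves the continuation problem over $Q_i\in[a,b]$ with $a=\ell\underline r$. By Corollary~\ref{thm:reset}, which rests on Theorem~\ref{thm: unique NE}, this continuation has a unique quantity equilibrium. All validators share the same parameters and interval, so permuting them maps an equilibrium to an equilibrium. Uniqueness then forces the equilibrium to be symmetric: $Q_i=q$ for all $i$. It therefore suffices to characterize the common $q$ through each validator's own optimality condition evaluated at the symmetric profile.

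Next we evaluate the first-order condition~\eqref{first order} at $Q_i=q$ and $Q_{-i}=(n-1)q$, so that $Q=nq$. The numerator is $\alpha\bigl((n-1)q+q/2\bigr)=\alpha nq\bigl(1-\tfrac1{2n}\bigr)$ and the denominator is $(nq)^{3/2}$. The partial derivative therefore equals
\[
\frac{\alpha}{\sqrt{nq}}\Bigl(1-\frac1{2n}\Bigr)-c-\kappa q=F_{n,\ell}(q).
\]
By~\eqref{second order}, $V_i$ is strictly concave in $Q_i$ for fixed $Q_{-i}$. So the KKT conditions of Proposition~\ref{prop opt} say that $q$ is a best response to $(n-1)q$ exactly when one of three cases holds:
\begin{itemize}
\item $q=a$ and $F_{n,\ell}(a)\le0$;
\item $q=b$ and $F_{n,\ell}(b)\ge0$;
\item $a<q<b$ and $F_{n,\ell}(q)=0$.
\end{itemize}
The case $a=b$ is trivial, since the feasible set is a single point.

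The remaining step is to show these cases select exactly the $q$ described in the statement. This is the main point needing care, because $F_{n,\ell}$ is not the derivative of a single function being optimized: $Q_{-i}$ moves with $q$. The key observation is that $q\mapsto F_{n,\ell}(q)$ is strictly decreasing on $(0,\infty)$, since $(nq)^{-1/2}$ decreases and $-\kappa q$ is nonincreasing. Hence:
\begin{itemize}
\item If $F_{n,\ell}(a)\le0$, then $F_{n,\ell}<0$ on $(a,b]$, so neither the interior case nor the case $q=b$ can hold, and $q=a$.
\item If $F_{n,\ell}(b)\ge0$, symmetrically $q=b$.
\item Otherwise $F_{n,\ell}(a)>0>F_{n,\ell}(b)$. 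Continuity and strict monotonicity give a unique interior zero, and neither endpoint satisfies its KKT sign condition.
\end{itemize}
The cases $F_{n,\ell}(a)\le0$ and $F_{n,\ell}(b)\ge0$ cannot both hold when $a<b$, again by strict monotonicity. Existence and uniqueness from Corollary~\ref{thm:reset} confirm that this $q$ is the equilibrium.

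Finally, we read off the continuation quantities. Total stake is $Q=nq_{n,\ell}$. Each validator operates $\ell$ identities, so $D_{\mathrm{op}}=n\ell$ by Definition~\ref{decentralization}. Substituting $Q_i=q_{n,\ell}$ and $Q=nq_{n,\ell}$ into~\eqref{utility1} gives the payoff $\alpha q/\sqrt{nq}-C(q)=p(nq_{n,\ell})\,q_{n,\ell}-C(q_{n,\ell})$.
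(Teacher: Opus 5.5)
Your proposal is correct and takes essentially the same route as the paper: relabeling invariance plus the uniqueness in Corollary~\ref{thm:reset} forces a common quantity. The strict monotonicity of $F_{n,\ell}$ (the own-marginal payoff at the symmetric profile) then selects the binding endpoint or the unique interior root, and $Q$, $D_{\mathrm{op}}$ and $V_{n,\ell}$ follow by substitution, with your explicit endpoint KKT sign checks simply spelling out what the paper compresses into one sentence.
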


Symmetry and quantity uniqueness imply a common stake, while the decreasing marginal payoff determines the active bound or interior root.
The proof is in Appendix~\ref{appendix:symmetric_proofs}.

\paragraph{Unilateral restructuring.}
The symmetric equilibrium provides the baseline payoff for a unilateral identity-count deviation.
A feasible deviation to $k>0$ identities leaves the other $n-1$ validators symmetric, although their stakes adjust after the reset.
For a candidate aggregate stake $z>0$, the demand representation in Section~\ref{sec:price-demand} gives
\[
 q_k(z)=\operatorname{proj}_{[k\underline r,\,B]}
 \left(\frac{z[p(z)-c]}{p(z)/2+\kappa z}\right),
\]
where $p(z)=\alpha/\sqrt z$.
If the deviator changes from $\ell$ to $k$ identities, the new total stake solves
\begin{equation}\label{eq:asymmetric_reset}
 q_k(z)+(n-1)q_\ell(z)=z,
\end{equation}
and the deviator compares $V_i(q_k(z);(n-1)q_\ell(z))$ with its original payoff $V_{n,\ell}$.
\begin{example}[Uncapped split incentives]\label{ex:uncapped_restructuring}
Consider the symmetric benchmark in which each validator initially operates one identity.
For $n\ge2$ validators, let $q_n^0$ be the common equilibrium stake.
If $B\ge2\underline r$, a validator that splits into two identities changes its payoff from $u_n^0$ to $v_n^0$, where
\[
 u_n^0=\alpha\sqrt{\frac{q_n^0}{n}}-C(q_n^0),
 \qquad
 v_n^0=\frac{\alpha q_2(z_n)}{\sqrt{z_n}}-C(q_2(z_n)).
\]

Here $z_n$ is the post-split total determined by Equation~\eqref{eq:asymmetric_reset} with $\ell=1$ and $k=2$.
In particular, for $\alpha=13/6$, $c=1$, $\kappa=3/20$, $B=2$, and $\underline r=5/8$, we have $v_2^0<u_2^0$ and $v_3^0>u_3^0$, which means such splitting is profitable for three validators but not two.
\end{example}

\subsection{Upper Bound and the Private Split Incentives}\label{solution section}

For $n=2$, Example~\ref{ex:uncapped_restructuring} shows that a split can improve $(S,D_{\mathrm{op}})$ yet remain privately unprofitable.
To address this incentive conflict, we introduce a finite per-identity effective-balance cap $\bar r\ge\underline r$, giving $L_i\underline r\le Q_i\le\min\{B_i,L_i\bar r\}$.
Proposition~\ref{prop:symmetric} and Equation~\eqref{eq:asymmetric_reset} apply with these bounds for all validators.
The cap is nonbinding when $\bar r\ge\max_i B_i$.

A binding cap lowers the value of staying with the old visible identity list because excess capacity behind one identity is no longer credited. Splitting can then route that capacity through another reserve-clearing identity. Thus a cap is useful here only if it both makes the validator adopt the split and improves the resulting \((S,D_{\mathrm{op}})\) coordinates.

\begin{example}[Cap-induced splitting]\label{ex:reset_split_flip}
Retain the parameters of Example~\ref{ex:uncapped_restructuring} and denote the corresponding capped payoffs by $u_n^{\mathrm{cap}}$ and $v_n^{\mathrm{cap}}$.
For $n=2$ and $\bar r=0.8$, the pre- and post-split quantity equilibria are $(0.8,0.8)$ and $(5/4,0.8)$, giving $v_2^{\mathrm{cap}}>u_2^{\mathrm{cap}},(S,D_{\mathrm{op}})=\left(\frac{41}{20},3\right).$ The split is therefore adopted and improves both coordinates over the uncapped outcome.
\end{example}

\begin{figure}[!htbp]
\centering
\includegraphics[width=.92\linewidth]{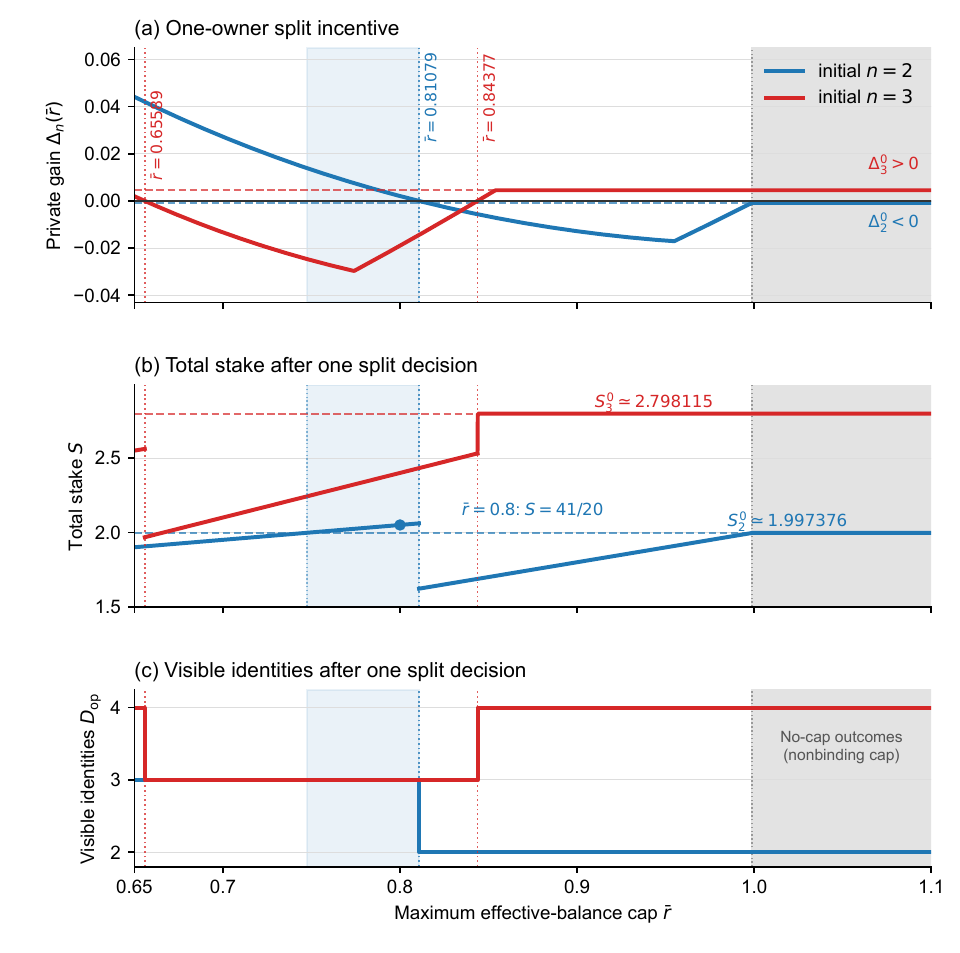}
\Description{Three panels show the gain from one validator splitting, the resulting total stake, and the visible identity count for two and three initially symmetric economic validators across effective-balance caps.}
\caption{One-validator split comparisons for initially symmetric $n=2$ (blue) and $n=3$ (red) cases.
Panel (a) gives the post-reset payoff gain, and panels (b) and (c) give outcomes after that split decision, with no split at indifference.
Dashed horizontal lines mark the corresponding uncapped benchmarks.
Gray shading marks caps that are nonbinding in both cases, yielding the no-cap outcomes.
The blue shaded interval improves both protocol coordinates for $n=2$.
Each curve evaluates one split, and some tight-cap outcomes admit further restructuring.}
\label{fig:split_incentive_stability}
\end{figure}

Figure~\ref{fig:split_incentive_stability} separates private adoption from protocol improvement.
For $n=2$, panel (a) gives profitable splits below around $0.81$, while panel (b) requires $S=5/4+\bar r>S_2^0$ to improve total stake.
Together, these conditions identify caps that induce this split and improve both protocol coordinates.
A tighter cap can still induce splitting but reduce total stake.
For $n=3$, intermediate caps instead suppress an already profitable uncapped split.
Appendix~\ref{appendix:examples} gives the calculations.

The role of this simple example is therefore conceptual as much as numerical.
By limiting how much stake one visible identity can carry, the cap changes the validator's feasible choices across identity counts and its incentive to split, making it an integral part of protocol design.
The next subsection evaluates protocol designs through the equilibria they induce.

\subsection{Post-Reset Frontier and Protocol Design}\label{subsec:equilibrium_protocol_frontier}
The preceding subsections describe how validators restructure their identities and how the market resets after each change.
We now evaluate these responses through the protocol objective $\Phi(S,D_{\mathrm{op}})$, using observable total stake $S=Q$ and visible participation $D_{\mathrm{op}}=\sum_iL_i$.

For a fixed design $\mathcal D$, validators choose identity counts from their feasible menus, anticipating the resulting stake equilibrium.
Their payoffs at that equilibrium determine the incentives to split, merge, enter, or exit.

\begin{definition}[Post-reset  equilibrium]\label{def:separation_combination}
A feasible profile is a pure post-reset equilibrium if
\begin{equation}\label{eq:count_NE}
 V_i(\bm L)\ge V_i(k,\bm L_{-i})
 \quad\text{for every }i\text{ and }k\in\mathcal L_i(\underline r).
\end{equation}
\end{definition}

\begin{theorem}[Post-reset equilibrium and individual rationality]\label{thm:post_reset_partition_equilibrium}
The restructuring game admits a mixed Nash equilibrium that extends to a subgame-perfect equilibrium.
Each validator's payoff is nonnegative at pure equilibria and nonnegative in expectation at mixed equilibria.
\end{theorem}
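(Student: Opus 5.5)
The plan is to treat the restructuring game of Definition~\ref{def:restructuring-game} as a finite normal-form game and then apply Nash's theorem. The player set $\mathcal E=\{1,\ldots,M\}$ is finite. Each action set $\mathcal L_i(\underline r)=\{0,1,\ldots,\lfloor B_i/\underline r\rfloor\}$ is finite because $B_i<\infty$ and $\underline r>0$.

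The first step is to verify that the payoff map $\bm L\mapsto V_i(\bm L)$ is well defined and real-valued on every profile in $\prod_i\mathcal L_i(\underline r)$.
\begin{itemize}
\item For a nonzero profile, Corollary~\ref{thm:reset} gives a unique continuation total $z>0$ and unique quantities $Q_i^*(\bm L)$. Uniqueness here comes from Theorem~\ref{thm: unique NE} applied to the intervals $[L_i\underline r,b_i^{L_i}]$ through Lemma~\ref{lem:reduction}.
\item Every interval is nonempty. Without a cap, $L_i\le\lfloor B_i/\underline r\rfloor$ gives $L_i\underline r\le B_i$. With a cap, $\bar r\ge\underline r$ additionally gives $L_i\underline r\le L_i\bar r$.
\item The all-zero profile has payoff zero by convention.
\end{itemize}
So each $V_i$ is a finite real number, with no continuity issue since the action space is discrete. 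Nash's theorem then yields a mixed equilibrium $\sigma^*$.

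To extend $\sigma^*$ to a subgame-perfect equilibrium of the two-stage game, I would use backward induction. Each proper subgame starts after $\bm L$ is observed, and I would prescribe in it the unique quantity equilibrium $\mathbf Q^*(\bm L)$ with equal within-validator splits $s_{i\ell}=Q_i^*/L_i$, which Lemma~\ref{lem:reduction} shows are feasible and payoff-equivalent. Inactive validators play $0$. This strategy is a Nash equilibrium in every subgame, and the induced first-stage payoffs are exactly $V_i(\bm L)$. Hence playing $\sigma^*$ at the root together with this continuation profile is subgame perfect. Restricting to pure continuation strategies is harmless because the continuation is unique in validator totals.

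For individual rationality, the key observation is that $L_i=0$ is always in $\mathcal L_i(\underline r)$ and yields payoff $0$ regardless of $\bm L_{-i}$. At a pure equilibrium, inequality \eqref{eq:count_NE} with $k=0$ gives $V_i(\bm L)\ge V_i(0,\bm L_{-i})=0$. At a mixed equilibrium, every action in the support of $\sigma_i^*$ is a best response to $\sigma_{-i}^*$. Therefore
\[
\mathbb E_{\sigma^*}[V_i]\ge \mathbb E_{\sigma_{-i}^*}[V_i(0,\bm L_{-i})]=0.
\]

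I expect the main obstacle to be the claim that $V_i(0,\bm L_{-i})=0$ is genuinely consistent with the continuation. Setting $L_i=0$ removes $i$ from the active set, and the remaining validators re-clear the market; either way the payoff of $i$ is $0$. This has to be checked against the model's conventions for the all-inactive profile and for profiles where only $i$ had been active.

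A related subtlety is that Assumption~\ref{assumption 1} (share below $2/3$) may fail at some continuation outcomes. I would handle it by noting that the theorem concerns the payoff formula \eqref{utility1}, which the paper extends to all resource-feasible profiles, so Nash's theorem applies without imposing the assumption.
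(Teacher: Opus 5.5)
Your proposal is correct and follows essentially the same route as the paper. Both arguments treat the count game as a finite normal-form game whose payoffs are single-valued by Corollary~\ref{thm:reset}. Both get a mixed equilibrium from Nash's theorem, which the paper phrases as a Kakutani fixed-point argument. Both obtain subgame perfection by attaching the unique quantity reset, with equal within-validator allocation, to every count profile. And both derive individual rationality from the always-available action $L_i=0$, which pays zero. Your handling of inactive validators and of profiles that violate Assumption~\ref{assumption 1} (through the formal payoff formula) also matches the paper.
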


\paragraph{The Designer's Problem.}
The designer chooses $\mathcal D=(\alpha,\underline r,\bar r)$ to maximize its own objective $\Phi(S,D_{\mathrm{op}})$ through validators' equilibrium response.
For a fixed population, let $\Theta$ be a finite nonempty design menu and $\mathcal N(\mathcal D)$ the mixed-equilibrium set of the restructuring game.
Given an equilibrium-selection rule $\eta$ with $\eta(\mathcal D)\in\mathcal N(\mathcal D)$, the designer solves
\begin{equation}\label{eq:protocol_problem}
 \max_{\mathcal D\in\Theta}
 \mathbb E_{\eta(\mathcal D)}[\Phi(S,D_{\mathrm{op}})].
\end{equation}

The selection rule describes which equilibrium follows a design and is held fixed when comparing designs.
It is not an additional action controlled by the designer.
For a selected pure equilibrium, the objective is simply $\Phi(S(\mathcal D),D_{\mathrm{op}}(\mathcal D))$.
Without a specified selection rule, equilibrium outcomes form a correspondence rather than a single-valued response.

A finite design menu guarantees attainment once preferences and equilibrium selection are specified.
Among selected pure outcomes, a maximizer of strictly increasing $\Phi$ is Pareto efficient within that selected set.
Appendix~\ref{section proof protocol frontier selection} records these observations.

\section[Illustrative Simulation]{Illustrative Simulation} \label{simulation}
Building on the preceding analysis, we combine  on-chain budget proxies with a dynamic bidding algorithm to  illustrate staking responses to protocol parameters, particularly the reserve price.
We construct an address-level sample from Beacon Chain stakers, Lido participants, and synthetic records generated from the Lido bid-limit distribution with multiplicative noise. 
Each distinct address or synthetic record supplies a budget proxy for a potential validator.
Lido participants supply funds through a pooled staking service, and an economic owner can hold multiple addresses.
The simulation maps each budget proxy to one decision maker.
We then simulate owners' stake and identity choices under the complete-information benchmark.
\subsection{Budget Proxies and Sample Construction}\label{appendix:empirical-construction}
The data construction maps on-chain address records to staking-budget proxies for the model's potential validators.
We consider three factors in data selection and usage.

The first consideration concerns the choice of participants. In alignment with the spirit of blockchain for decentralization, our focus is on solo stakers—those who individually control their staking keys. This is because true decentralization is better captured by the number of independent (solo) stakers, rather than by the total number of validators. Addresses controlled by centralized entities reduce decentralization and thus need to be excluded from our analysis. Each retained address or synthetic record represents a potential validator in the simulation.

The second consideration in the time interval. Because we mainly focus on the PoS, we start from block \texttt{15537394}, the first block following Ethereum’s transition to PoS, and end at block \texttt{22164400}, the most recent block available at the time of completing this research.

The third consideration is the data source. A natural starting point is the Beacon Chain, where all validators—whether acting independently or on behalf of organizations—stake directly to the Ethereum deposit contract at \path{0x00000000219ab540356cBB839Cbe05303d7705Fa}. Although Ethereum requires a minimum of 32 ETH to activate a validator, some transactions, such as \path{0xff0552f53e5fcdbbe84814ad207b6a3ec213c4c6bfe094e23694aad008f67f91}, deposit less than 32 ETH without validator activation. However, relying solely on Beacon Chain data renders the concept of a reserve price ineffective for deposits under 32 ETH, since these do not correspond to active validators. To address this gap, we incorporate data from Lido, a major liquid staking protocol. Lido pools ETH from users—many of whom deposit less than 32 ETH—and stakes it collectively. In return, users receive stETH tokens at a 1:1 ratio with ETH, which eliminates price distortion between different staking forms. Additionally, we collect balance data for the selected addresses directly from Etherscan. The sum of each address's net staked amount and its corresponding wallet balance is used to define the ``bid limit'' for each agent in our simulation.
Lido users supply funds to a pool whose node operators run validators~\citep{lidoStakingDocs}.
For the reserve-price comparison, each user address is assigned a direct-staking decision with its constructed budget.
Observed wallet balances and net staking amounts determine this budget proxy.
Assumption~\ref{assumption 1} is evaluated on the resulting simulated validator stakes.

Several different methodologies have been applied to identify candidate solo-staker addresses.
The dataset in~\citet{ethsolo2024} describes probabilistic machine-learning models and a later heuristic filter covering active validator identities as of June 30, 2024.
The approach in~\citet{stakecat_solo_2024} excludes identified entities and uses contract-interaction history to distinguish candidate lists.
The candidate lists use different selection criteria and observation dates.
Nevertheless, the general data acquisition strategies employed in these works are instructive.
Broadly speaking, we proceed in three steps as follows:

\textbf{Step 1: Beacon Chain data.} We collect data from the Beacon Chain using Dune\footnote{\url{https://dune.com/}}, following the methodology of \citet{stakecat_solo_2024}. Specifically, we intersect the identified addresses with Dune’s \texttt{labels.addresses} table, which tags blockchain addresses with descriptive labels. In particular, we focus on addresses labeled under the category \texttt{infrastructure}, which includes \texttt{Ethereum Validator} and \texttt{Ethereum Miner}. For each selected address, we retrieve balance data from Etherscan and compute the corresponding bid limit on the Beacon Chain.

\textbf{Step 2: Lido data.} In contrast to Beacon Chain deposits, Lido’s staking involves fewer addresses that serve multiple roles. This makes it difficult to rely solely on \texttt{labels.addresses}, which generally assigns addresses a single dominant function (e.g., \texttt{dex}). Therefore, we again follow the methodology of \citet{stakecat_solo_2024}  to identify relevant Lido addresses. After fetching balance data, we compute bid limits for these addresses. The resulting dataset is then uploaded to Dune for further classification.  As an address-selection heuristic, we filter based on the following categories: \texttt{dex}, \texttt{tornado\_cash}, \texttt{op\_governance}, \texttt{op\_traders}, and \texttt{infrastructure} (which includes both \texttt{Ethereum Validator} and \texttt{Ethereum Miner}).

\textbf{Step 3: Synthetic data generation.} To simulate additional staking behavior from entities such as CEXs, staking pools, and other liquid staking protocols (excluding Lido), we assume that the distribution of bid limits in these groups follows that of Lido. Based on this assumption, we  set the number of synthetic agents using the relation\footnote{\url{https://dune.com/hildobby/eth2-staking}}:
\begin{equation}
    G = N_{\text{Lido}} \times 
    \frac{S_{\text{non-Lido}}}{S_{\text{Lido}}},
\end{equation}
where \( G \) is the number of generations, \( N_{\text{Lido}} \) is the number of Lido users, \( S_{\text{non-Lido}} \) is the staking amount in liquid staking, CEX, and staking pools (excluding Lido), and \( S_{\text{Lido}} \) is the staking amount via Lido.

To introduce  illustrative variability into the synthetic bid limits while preserving their scale, we apply multiplicative Gaussian noise. Specifically, each bid limit \( b \), sampled from the Lido distribution, is perturbed as
\begin{equation}
    b_{\text{generated}} = max\{0,b \cdot (1 + \varepsilon)\}, \quad \varepsilon \sim \mathcal{N}(0, \sigma^2),
\end{equation}
where \( \varepsilon \) is drawn from a normal distribution with mean zero and standard deviation \( \sigma = 0.005 \). This ensures proportional fluctuations. Negative values are clipped to zero to avoid invalid bid limits.

Finally, drawing on insights from \citet{ethsolo2024}, which notes that solo stakers generally operate fewer than 200 validators, we filter our data to retain only those addresses meeting this condition. We provide a summary table detailing the address-level bid limits obtained from the Beacon Chain, Lido, and synthetically generated datasets as follows:
\begin{table}[h!]
\centering
\caption{Descriptive Statistics of Bid Limits}
\label{table description}
\begin{tabular}{l r  r r r r r r r}
\toprule
Source & Count & Mean & Min & 25\% & 50\% & 75\% & Max \\
\midrule
Beacon    & 4017   & 239.75  & 1.00   & 32.00  & 64.03  & 173.43 & 5696.00 \\
Lido      & 134133 & 7.51    & 0.10   & 0.27   & 0.78   & 2.98   & 5374.01 \\
Generated & 221267 & 9.06    & 0.10   & 0.49   & 1.00   & 4.23   & 5397.65 \\
\bottomrule
\end{tabular}
\end{table}

\subsection{Parameters and Cost Assumptions} 
Both panels use $\alpha=100$ and $\bar r=2048$ ETH, with $\underline r$ varying from 1 to 40 ETH.
$c_i=0.1$ for $B_i\ge32$ ETH and $c_i=0.105$ otherwise.
Addresses represented in the Beacon data have $\kappa_i=6\times10^{-5}$, while all others have $\kappa_i=3\times10^{-4}$.

Because \(\bar r\) is high relative to most observed bid limits, the reserve price \(\underline r\) has the most visible effect on participation in this counterfactual.

\paragraph{Parameter rationale.}
The two binary classifications distinguish the level $c_i$ and slope $\kappa_i$ of marginal staking cost, $C_i'(Q)=c_i+\kappa_iQ$.
Assigning lower $c_i$ to larger budgets represents the hypothesis that committing funds carries a smaller baseline opportunity cost for validators with greater resources.
This interpretation is motivated by the role of wealth and transaction needs in the staking--liquidity trade-off studied by \citet{cong2025tokenomics}.
The 32 ETH cutoff aligns the budget classification with Ethereum's activation requirement and remains fixed as the reserve varies.

The curvature classification distinguishes addresses represented in the Beacon data from Lido addresses and their synthetic extension, all within Ethereum.
Lido accepts small deposits and issues transferable staking claims, whereas direct staking entails validator operation and delayed access to committed funds~\citep{lidoStakingDocs}.
These differences motivate allowing preferences and constraints to vary across staking channels, as in \citet{beccuti2025stakingmarket}.
The lower $\kappa_i$ assigned to Beacon addresses represents a selection hypothesis: validators already using direct staking may face a smaller incremental liquidity burden from additional commitments.
As in Appendix~\ref{appendix:cost_foundation}, this coefficient measures the cost of committing the validator's funds, rather than the operating fees of a staking service.
The synthetic records inherit the non-Beacon assignment because they extend the Lido budget distribution.

\subsection{Numerical Results}

\begin{figure}[htbp]
\centering
\begin{minipage}[b]{.49\linewidth}
\centering
\includegraphics[height=.61\linewidth]{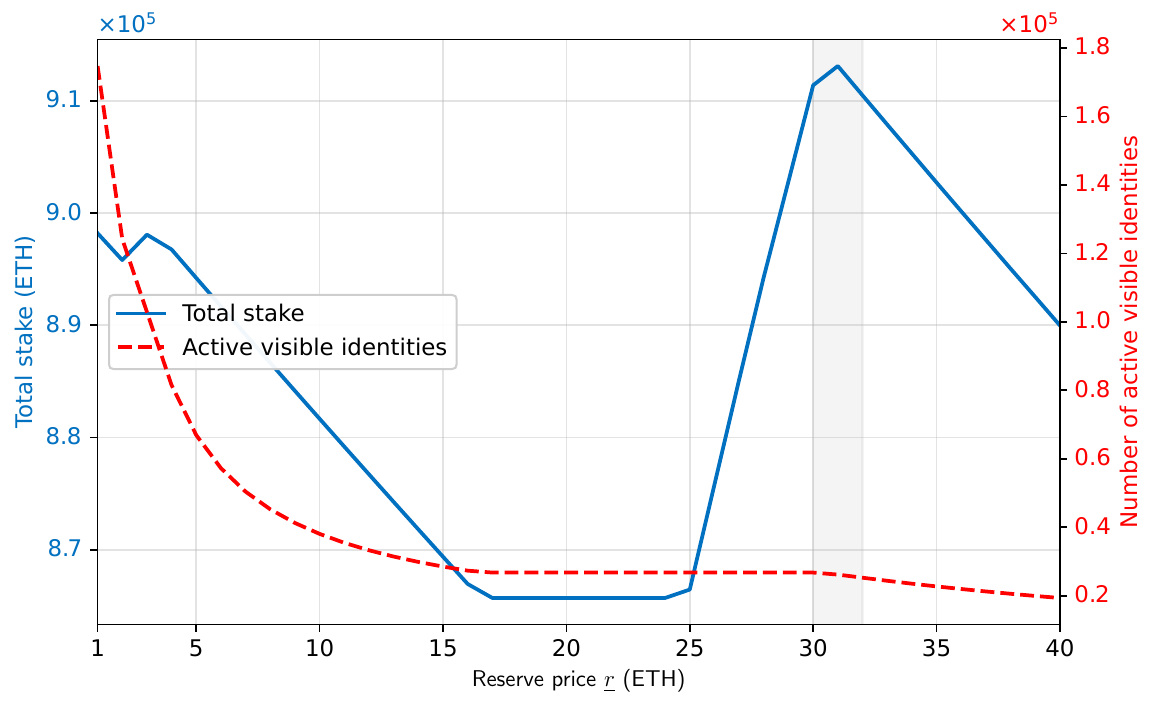}
\par (a) Reserve-price counterfactuals.
\end{minipage}\hfill
\begin{minipage}[b]{.49\linewidth}
\centering
\includegraphics[height=.61\linewidth]{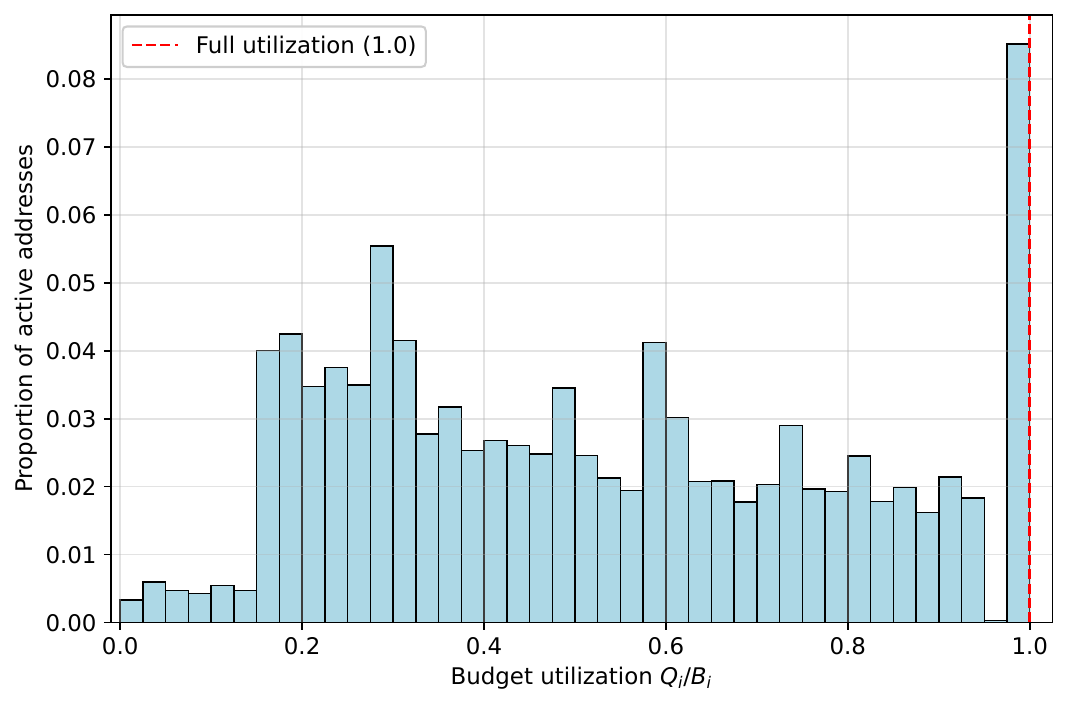}
\par (b) Usage ratio.
\end{minipage}
\Description{The left panel shows total effective stake and active visible identities across reserve prices. The right panel shows the distribution of budget utilization among active records.}
\caption{Illustrative reserve-price counterfactuals and usage ratios under common parameters.}
\label{fig:combined}\label{fig:separable_bidding}\label{fig:usage_ratio}\label{fig:usage}
\end{figure}

Figure~\ref{fig:combined} summarizes the main simulation results.
Panel (a) shows that participation falls as the reserve rises, while total stake varies non-monotonically.
In the illustrative benchmark, lowering the reserve from 32 ETH to 30 ETH increases active identities by 5.922\% and total effective stake by 0.098\%.
Panel (b) reports the corresponding usage ratios, defined as each active address's total equilibrium stake divided by its original budget.
Many addresses retain unused capacity, while the mass at one identifies binding budget constraints.
The reserve-price effect is therefore local: 30 ETH provides a participation-improvement benchmark under these parameters.

\paragraph{Interpretation of the reserve sweep.}

This specification explains the declining identity count at low reserves and the plateau at intermediate reserves in Figure~\ref{fig:combined}(a).
At $\underline r=1$ ETH, the unit reward is approximately $0.105515$, allowing smaller-budget addresses with $c_i=0.105$ to participate.
As the reserve rises, eligibility and profitability constraints reduce the active identity count from 174,725 to 26,762 at 17 ETH, while total stake falls by 3.619\%.
Both quantities and participation remain unchanged over 17--24 ETH, where the minimum active stake, approximately 24.920 ETH, exceeds the reserve.
From 25 ETH, the lower bound starts binding for 22,749 non-Beacon participants.

\FloatBarrier
\section{Discussion}\label{sec:discussion}

The model and simulation give two design lessons. First, the 32 ETH threshold has a local participation cost: in the calibrated data, lowering the threshold to 30 ETH admits marginal addresses and increases active identity count while leaving total effective stake close to the current-threshold benchmark. Second, reserve and cap policies should be evaluated jointly.
The reserve affects both eligibility and the profitability of the minimum stake commitment (see Appendix~\ref{appendix:symmetric_details} for an analysis of a reserve-binding branch).
The cap, in turn, can alter splitting incentives.
Together, these effects determine the security--participation outcomes that the designer evaluates under its preferences and the specified equilibrium-selection rule.

The complete-information benchmark abstracts from uncertainty about ownership and opponents' costs.
In practice, mixers and other privacy tools can obscure transaction links, making common ownership harder to infer~\citep{meiklejohn2018mobius}.
An incomplete-information extension would model owners' beliefs about common control and private costs while retaining joint optimization across each owner's identities. Extensions with MEV, delegation, latency, coalition-level ownership constraints, and alternative issuance schedules would test how robust this frontier is once protocol rewards interact with execution-layer incentives.

\clearpage
\bibliographystyle{RM}
\bibliography{references}

@article{vickrey1961counterspeculation,
  title={Counterspeculation, auctions, and competitive sealed tenders},
  author={Vickrey, William},
  journal={The Journal of finance},
  volume={16},
  number={1},
  pages={8--37},
  year={1961},
  publisher={JSTOR}
}

@article{myerson1981optimal,
  title={Optimal auction design},
  author={Myerson, Roger B},
  journal={Mathematics of operations research},
  volume={6},
  number={1},
  pages={58--73},
  year={1981},
  publisher={INFORMS}
}

@article{maheswaran2003nash,
  title={Nash equilibrium and decentralized negotiation in auctioning divisible resources},
  author={Maheswaran, Rajiv T and Ba{\c{s}}ar, Tamer},
  journal={Group Decision and Negotiation},
  volume={12},
  number={5},
  pages={361--395},
  year={2003},
  publisher={Springer}
}

@inproceedings{zhang2024max,
  title={Max Attestation Matters: Making Honest Parties Lose Their Incentives in Ethereum $\{$PoS$\}$},
  author={Zhang, Mingfei and Li, Rujia and Duan, Sisi},
  booktitle={33rd USENIX Security Symposium (USENIX Security 24)},
  pages={6255--6272},
  year={2024}
}

@inproceedings{chen2024game,
  title={A Game Theoretic Analysis of Validator Strategies in Ethereum 2.0},
  author={Chen, Chien-Chih and Golab, Wojciech},
  booktitle={Proceedings of the 6th ACM International Symposium on Blockchain and Secure Critical Infrastructure},
  pages={1--14},
  year={2024}
}

@article{kelly1997charging,
  title={Charging and rate control for elastic traffic},
  author={Kelly, Frank},
  journal={European transactions on Telecommunications},
  volume={8},
  number={1},
  pages={33--37},
  year={1997},
  publisher={Wiley Online Library}
}

@article{feldman2008proportional,
  title={The proportional-share allocation market for computational resources},
  author={Feldman, Michal and Lai, Kevin and Zhang, Li},
  journal={IEEE Transactions on Parallel and Distributed Systems},
  volume={20},
  number={8},
  pages={1075--1088},
  year={2008},
  publisher={IEEE}
}

@inproceedings{christodoulou2015efficiency,
  title={On the efficiency of the proportional allocation mechanism for divisible resources},
  author={Christodoulou, George and Sgouritsa, Alkmini and Tang, Bo},
  booktitle={International Symposium on Algorithmic Game Theory},
  pages={165--177},
  year={2015},
  organization={Springer}
}

@article{feldman1974fairness,
  title={Fairness and envy},
  author={Feldman, Allan and Kirman, Alan},
  journal={The American Economic Review},
  volume={64},
  number={6},
  pages={995--1005},
  year={1974},
  publisher={JSTOR}
}

@article{roughgarden2021transaction,
  title={Transaction fee mechanism design},
  author={Roughgarden, Tim},
  journal={ACM SIGecom Exchanges},
  volume={19},
  number={1},
  pages={52--55},
  year={2021},
  publisher={ACM New York, NY, USA}
}

@inproceedings{daian2020flash,
  title={Flash boys 2.0: Frontrunning in decentralized exchanges, miner extractable value, and consensus instability},
  author={Daian, Philip and Goldfeder, Steven and Kell, Tyler and Li, Yunqi and Zhao, Xueyuan and Bentov, Iddo and Breidenbach, Lorenz and Juels, Ari},
  booktitle={2020 IEEE symposium on security and privacy (SP)},
  pages={910--927},
  year={2020},
  organization={IEEE}
}

@inproceedings{bachrach2013incentives,
  title={Incentives and efficiency in uncertain collaborative environments},
  author={Bachrach, Yoram and Syrgkanis, Vasilis and Vojnovi{\'c}, Milan},
  booktitle={International Conference on Web and Internet Economics},
  pages={26--39},
  year={2013},
  organization={Springer}
}

@inproceedings{birmpas2022parallel,
  title={Parallel contests for crowdsourcing reviews: existence and quality of equilibria},
  author={Birmpas, Georgios and Kovalchuk, Lyudmila and Lazos, Philip and Oliynykov, Roman},
  booktitle={Proceedings of the 4th ACM Conference on Advances in Financial Technologies},
  pages={268--280},
  year={2022}
}

@article{chen2023gacha,
  title={Gacha game analysis and design},
  author={Chen, Canhui and Fang, Zhixuan},
  journal={Proceedings of the ACM on Measurement and Analysis of Computing Systems},
  volume={7},
  number={1},
  pages={1--45},
  year={2023},
  publisher={ACM New York, NY, USA}
}

@article{saleh2021blockchain,
  title={Blockchain without waste: Proof-of-stake},
  author={Saleh, Fahad},
  journal={The Review of financial studies},
  volume={34},
  number={3},
  pages={1156--1190},
  year={2021},
  publisher={Oxford University Press}
}

@article{alsabah2020pitfalls,
  title={Pitfalls of Bitcoin's Proof-of-Work: R\&D arms race and mining centralization},
  author={Alsabah, Humoud and Capponi, Agostino},
  journal={Available at SSRN 3273982},
  year={2020}
}

@misc{ethsolo2024,
  author       = {{Eth-educators}},
  title        = {Ethereum Solo Staker Dataset},
  year         = {2024},
  note         = {\url{https://github.com/eth-educators/solo-stakers}}
}

@misc{stakecat_solo_2024,
  author       = {{Stake-Cat}},
  title        = {Ethereum Addresses Belonging to Solo Stakers},
  year         = {2024},
  note         = {\url{https://github.com/Stake-Cat/Solo-Stakers}}
}

@inproceedings{brunjes2020reward,
  title={Reward sharing schemes for stake pools},
  author={Br{\"u}njes, Lars and Kiayias, Aggelos and Koutsoupias, Elias and Stouka, Aikaterini-Panagiota},
  booktitle={2020 IEEE european symposium on security and privacy (EuroS\&p)},
  pages={256--275},
  year={2020},
  organization={IEEE}
}

@book{buterin2022proof,
  title={Proof of stake: The making of Ethereum and the philosophy of blockchains},
  author={Buterin, Vitalik},
  year={2022},
  publisher={Seven Stories Press}
}

@misc{ethereumdocs,
  title = {Ethereum Proof-of-Stake Economics},
  author = {{Ethereum Foundation}},
  year = {2023},
  note         = {\url{https://ethereum.org/en/developers/docs/consensus-mechanisms/pos/}}
}

@article{john2025proof,
  title={Proof-of-work versus proof-of-stake: A comparative economic analysis},
  author={John, Kose and Rivera, Thomas J and Saleh, Fahad},
  journal={The Review of Financial Studies},
  volume={38},
  number={7},
  pages={1955--2004},
  year={2025},
  publisher={Oxford University Press}
}

@article{buterin2020combining,
  title={Combining GHOST and casper},
  author={Buterin, Vitalik and Hernandez, Diego and Kamphefner, Thor and Pham, Khiem and Qiao, Zhi and Ryan, Danny and Sin, Juhyeok and Wang, Ying and Zhang, Yan X},
  journal={arXiv preprint arXiv:2003.03052},
  year={2020}
}

@misc{cong2025tokenomics,
  author       = {Cong, Lin William and He, Zhiheng and Tang, Ke},
  title        = {The Tokenomics of Staking},
  howpublished = {NBER Working Paper No. 33640},
  year         = {2025},
  note         = {\url{https://doi.org/10.3386/w33640}}
}

@article{CAN2022274,
title = {On reward sharing in blockchain mining pools},
journal = {Games and Economic Behavior},
volume = {136},
pages = {274-298},
year = {2022},
issn = {0899-8256},
author = {Burak Can and Jens {Leth Hougaard} and Mohsen Pourpouneh}
}

@inproceedings{fanti2019compounding,
  title={Compounding of wealth in proof-of-stake cryptocurrencies},
  author={Fanti, Giulia and Kogan, Leonid and Oh, Sewoong and Ruan, Kathleen and Viswanath, Pramod and Wang, Gerui},
  booktitle={International conference on financial cryptography and data security},
  pages={42--61},
  year={2019},
  organization={Springer}
}

@misc{eip7251,
  author = {Neuder, Mike and Francesc, Francesco and Dapplion and Kalinin, Mikhail and Aditya, Aditya and Drake, Justin and Lightclient and Lange, Felix},
  title = {EIP-7251: Increase the MAX\_EFFECTIVE\_BALANCE},
  year = {2023},
  note         = {\url{https://eips.ethereum.org/EIPS/eip-7251}}
}

@article{johari2004efficiency,
  title={Efficiency loss in a network resource allocation game},
  author={Johari, Ramesh and Tsitsiklis, John N},
  journal={Mathematics of Operations Research},
  volume={29},
  number={3},
  pages={407--435},
  year={2004},
  publisher={INFORMS}
}

@article{johari2009efficiency,
  title={Efficiency of scalar-parameterized mechanisms},
  author={Johari, Ramesh and Tsitsiklis, John N},
  journal={Operations Research},
  volume={57},
  number={4},
  pages={823--839},
  year={2009},
  publisher={INFORMS}
}

@inproceedings{amanatidis2023partial,
  title={Partial Allocations in Budget-Feasible Mechanism Design: Bridging Multiple Levels of Service and Divisible Agents},
  author={Amanatidis, Georgios and Klumper, Sophie and Markakis, Evangelos and Sch{\"a}fer, Guido and Tsikiridis, Artem},
  booktitle={International Conference on Web and Internet Economics},
  pages={41--58},
  year={2023},
  organization={Springer}
}

@inproceedings{seddighin2019externalities,
  title={Externalities and fairness},
  author={Seddighin, Masoud and Saleh, Hamed and Ghodsi, Mohammad},
  booktitle={The World Wide Web Conference},
  pages={538--548},
  year={2019}
}

@article{garimidi2025tullock,
  title={Tullock contests in the wild: Applications in blockchains},
  author={Garimidi, Pranav and Neuder, Michael and Roughgarden, Tim},
  journal={ACM SIGecom Exchanges},
  volume={23},
  number={1},
  pages={24--34},
  year={2025}
}

@article{meiklejohn2018mobius,
  title={M{\"o}bius: Trustless tumbling for transaction privacy},
  author={Meiklejohn, Sarah and Mercer, Rebekah},
  year={2018},
  publisher={UCL Department of Computer Science}
}

@misc{beccuti2025stakingmarket,
  author       = {Beccuti, Juan and Chantramonklasri, Thunj
                  and Hafner, Matthias and Oderbolz, Nicolas},
  title        = {Towards a Formal Framework of the {Ethereum} Staking Market},
  year         = {2025},
  howpublished = {arXiv:2503.14385, version 2},
  url          = {https://arxiv.org/abs/2503.14385v2}
}

@misc{lidoStakingDocs,
  author       = {{Lido}},
  title        = {{Lido} Documentation: Introduction},
  year         = {n.d.},
  note         = {\url{https://docs.lido.fi/}}
}

\clearpage
\appendix
\section{Background}
\subsection{Nothing-at-stake Attack} \label{appendix:nothing-at-stake}
In Proof-of-Stake (PoS) systems, one major concern is the \textit{nothing-at-stake} attack, where a validator can sign blocks on multiple competing forks at little additional computational cost.
To mitigate this issue, Casper requires validators to stake ETH and makes specified conflicting votes slashable, including double votes for the same target epoch and surround votes~\citep{ethereumdocs}.
Routine missed duties instead incur missed rewards or ordinary penalties.
This mechanism highlights the role of staking in securing PoS consensus.
\citet{saleh2021blockchain} uses an economic model to argue that fork disagreement reduces the value of the native coin held by validators, and that PoS can reach consensus if minimum stakes are sufficiently large or rewards sufficiently modest.

\subsection{Tullock Benchmark and Ethereum Effective Stake}\label{appendix:tullock-benchmark}
The closest mathematical benchmarks are allocation-rate mechanisms and Tullock contests. In a canonical allocation-rate formulation, participant \(i\) chooses effort \(e_i\), receives
\[
    x_i(\mathbf e)=\frac{e_i}{\sum_j e_j},
\]
and obtains payoff
\[
    u_i^{\mathrm{AR}}(\mathbf e)=\mathcal V_i(x_i(\mathbf e))-C_i(e_i),
    \qquad \mathcal V_i'>0,\quad \mathcal V_i''\le 0.
\]

The allocation value is therefore increasing and concave in the received share. This is the structure used by allocation-rate and scalar-parameterized mechanism results~\citep{johari2004efficiency,johari2009efficiency}. A Tullock contest uses the same proportional-selection idea. For a positive prize value $v_i>0$, the simplest case has
\[
    p_i(\mathbf e)=\frac{e_i}{\sum_j e_j},\qquad
    u_i^{\mathrm{Tul}}(\mathbf e)=v_i p_i(\mathbf e)-C_i(e_i),
\]
so effort buys a higher winning probability, and the value side is monotone in the allocation or winning-probability share.

Ethereum staking shares this proportional-allocation rule, but the value of a given share also depends on total stake.
In the static setting, each active validator controls one identity and has payoff
\[
 V_i(Q_i;Q_{-i})
 =\alpha\frac{Q_i}{\sqrt{Q_i+Q_{-i}}}-c_iQ_i-\frac{\kappa_i}{2}Q_i^2.
\]

Writing $Q=Q_i+Q_{-i}$ and $x_i=Q_i/Q$, its gross reward is $\alpha x_i\sqrt Q$.
A fixed allocation-value function $\mathcal V_i(x_i)$ alone therefore cannot represent rewards across different aggregate stakes.
We use the auction framework of ~\citet{maheswaran2003nash} to derive best responses and scalar market clearing while retaining this dependence.
The dynamic model further allows one validator to coordinate several identities subject to separate balance constraints.

The marginal payoff in the static game is
\[
 \frac{\partial V_i}{\partial Q_i}
 =\alpha\frac{Q_i/2+Q_{-i}}{(Q_i+Q_{-i})^{3/2}}-c_i-\kappa_iQ_i.
\]

It can be positive at low stake and negative at high stake as rewards compete with the cost of committed funds.
Net payoff can also be nonmonotone in a Tullock contest, so this property alone does not distinguish the models.
For comparison, the idealized PoW boundary case has payoff
\[
 U_i^{PoW}(a_i;a_{-i})
 =R\frac{a_i}{a_i+a_{-i}}-c_i a_i,
\]
where $a_i$ is the miner's hash power, $a_{-i}=\sum_{j\ne i}a_j$, and $R>0$ is a fixed reward budget over a common accounting period.
Here $c_i a_i$ is mining cost over the same period.
Additional hash power raises winning probability but also raises mining cost.
The aggregate prize here remains fixed, in contrast to the stake-dependent reward pool in PoS.
Appendix~\ref{pow section} develops this boundary case.

\section{Economic Foundation for Cost}\label{appendix:cost_foundation}

The equilibrium analysis takes $C_i(Q)=c_iQ+\kappa_iQ^2/2$ as a primitive.
This appendix explains one environment in which the quadratic term arises from a cost of committing liquidity, even for a risk-neutral validator.
It provides an economic interpretation of the quadratic cost specification.

Let $W_i>0$ with $W_i\ge B_i$ be validator $i$'s liquid resources before the staking decision, measured in ETH-equivalent units.
These resources include the funds available for staking, so $W_i$ is not an additional endowment on top of $B_i$.
The case $W_i=B_i>0$ is admissible, while a zero-budget validator cannot stake.
Committing $0\le Q\le B_i$ leaves the immediately available buffer $W_i-Q$.
During an interval in which the committed stake cannot be recovered in time, a random liquidity need $X_i\ge0$ with finite mean arrives.
A unit funding shortfall incurs an incremental financing or unmet-payment loss $h_i\ge0$.
This delayed-access interpretation is consistent with withdrawal queues, whose duration depends on exit demand.\footnote{\url{https://ethereum.org/staking/withdrawals/}.}

Relative to keeping the same resources liquid, the extra expected loss is
\[
 \mathcal C_i(Q)=h_i\left\{
 \mathbb E[(X_i-W_i+Q)_+]-\mathbb E[(X_i-W_i)_+]
 \right\},\qquad (x)_+=\max\{x,0\}.
\]

The subtraction removes any shortfall that would have occurred without staking.
For each realized need, the positive-part function is nondecreasing and convex in $Q$.
Taking expectations preserves both properties, and $\mathcal C_i(0)=0$.
When $X_i$ has a continuous distribution function $F_i$, dominated differentiation gives
\[\mathcal C_i'(Q)=h_i[1-F_i(W_i-Q)].\]

The next unit of stake is more costly because a smaller remaining buffer makes a shortfall more likely.
If a density $g_i$ exists at the relevant point, the curvature is
$\mathcal C_i''(Q)=h_i g_i(W_i-Q)\ge0$.
Convexity therefore does not require quadratic preferences, independent validator failures, or a quadratic protocol penalty.

For the illustrative distribution $X_i\sim U[0,W_i]$, no shortfall occurs at $Q=0$ and
\[
 \mathcal C_i(Q)
 =\frac{h_i}{W_i}\int_{W_i-Q}^{W_i}(x-W_i+Q)\,dx
 =\frac{h_i}{2W_i}Q^2,\qquad
 \kappa_i=\frac{h_i}{W_i}.
\]

Adding a distinct linear opportunity cost $c_iQ$ gives the cost used in the main text.
The linear component should not count the same financing loss a second time.
For other distributions the resulting cost is generally convex but not exactly quadratic.
Where the expected loss is twice continuously differentiable, a quadratic can instead serve as a local approximation over the stake range of interest.

All identities operated by validator $i$ share the same available resources and funding need.
The relevant commitment is consequently $Q_i=\sum_\ell s_{i\ell}$, and its quadratic cost is
\[
 \frac{\kappa_i}{2}\left(\sum_\ell s_{i\ell}\right)^2
 =\frac{\kappa_i}{2}\sum_\ell s_{i\ell}^2
   +\kappa_i\sum_{\ell<j}s_{i\ell}s_{ij}.
\]

Deleting the cross terms would make a shared funding cost fall when the validator merely creates more keys.
No additional liquidity has been created by that action.

The funding model supplies a possible foundation for convexity, while the main theorems take the quadratic cost directly as a primitive.

\section{Proofs in Section 3}\label{appendix:static_proofs}

\subsection{Proof of Proposition~\ref{prop opt}}\label{section proof 1}
\begin{proof}
Without reserve and upper-bound constraints, the payoff is continuous on $[0,\infty)$ and tends to $-\infty$ as $Q_i\to\infty$ because $c_i>0$.
Strict concavity gives a unique maximizer $Q_i^\dagger\ge0$.
If $Q_{-i}>0$, the derivative at zero is $\alpha/\sqrt{Q_{-i}}-c_i$, giving a zero maximizer exactly when this is nonpositive.
For $Q_{-i}=0$, the continuously extended payoff is $\alpha\sqrt{Q_i}-C_i(Q_i)$ with value zero at the origin and infinite positive right derivative there.
In either case, a positive maximizer is the unique solution of Equation~\eqref{first order}.

For the constrained problem, if $B_i=\underline r$, the projection formula gives the only feasible bid.
For $B_i>\underline r$, form the Lagrangian
\[
 \mathcal L(Q_i,\lambda_1,\lambda_2)
 =V_i(Q_i;Q_{-i})+\lambda_1(B_i-Q_i)+\lambda_2(Q_i-\underline r).
\]

The KKT conditions are
\[
\begin{aligned}
 &\frac{\alpha(Q_{-i}+Q_i/2)}{(Q_i+Q_{-i})^{3/2}}-c_i-\kappa_iQ_i-\lambda_1+\lambda_2=0,\\
 &\underline r\le Q_i\le B_i,\qquad \lambda_1,\lambda_2\ge0,\\
 &\lambda_1(B_i-Q_i)=0,\qquad \lambda_2(Q_i-\underline r)=0.
\end{aligned}
\]

Strict concavity makes these conditions sufficient as well as necessary.
If $\underline r\le Q_i^\dagger\le B_i$, the unconstrained maximizer is feasible and both multipliers may be zero.
If $Q_i^\dagger>B_i$, the payoff is increasing on $[\underline r,B_i]$, so the upper bound is optimal with $\lambda_1=\partial V_i(B_i;Q_{-i})/\partial Q_i\ge0$ and $\lambda_2=0$.
If $Q_i^\dagger<\underline r$, the payoff is decreasing on $[\underline r,B_i]$, so the lower bound is optimal with $\lambda_2=-\partial V_i(\underline r;Q_{-i})/\partial Q_i\ge0$ and $\lambda_1=0$.
Together these cases give Equation~\eqref{eq:static_optimal_bid}.
\qedhere
\end{proof}

For each $i\in A$, the active quantity satisfies $Q_i\ge \underline r>0$, and the payoff can be written as
\[
 V_i(Q_i;Q_{-i})
 =Q_i\left(\frac{\alpha}{\sqrt Q}-c_i-\frac{\kappa_i}{2}Q_i\right).
\]

The outside payoff is zero, although exit is not available in this conditional quantity game.
Dividing $V_i\ge0$ by $Q_i>0$ gives Equation~\eqref{eq:static_IR}, proving both necessity and sufficiency.

For fixed opponents' stake, the constrained best response is weakly increasing in $B_i$ because a larger bid limit expands the feasible set.
The dynamic section examines how additional identities change the validator's feasible stake, holding its budget fixed.

\subsection{Proof of Proposition~\ref{prop:price_demand}}\label{section proof 2}
\begin{proof}
For an interior best response, write $Q=Q_i+Q_{-i}$ and $x_i=Q_i/Q$.
Substituting $Q_i=x_iQ$ and $Q_{-i}=(1-x_i)Q$ into Equation~\eqref{first order} gives
\[
 \alpha\frac{(1-x_i)Q+\tfrac12x_iQ}{Q^{3/2}}
 =c_i+\kappa_iQx_i,
\]
which simplifies to
\[
 p(Q)\left(1-\frac{x_i}{2}\right)=c_i+\kappa_iQx_i,
 \qquad p(Q)=\frac{\alpha}{\sqrt Q}.
\]

Collecting the terms in $x_i$ yields
\[
 \left(\frac{p(Q)}{2}+\kappa_iQ\right)x_i=p(Q)-c_i.
\]

Hence the unconstrained share implied by the first-order condition is
\[
 d_i^0(Q)=\frac{p(Q)-c_i}{p(Q)/2+\kappa_iQ}.
\]

The implied unconstrained bid is $Qd_i^0(Q)$.
Applying the reserve and bid-limit constraints and dividing by total stake gives
\[
 d_i(Q) =\frac{1}{Q}\operatorname{proj}_{[\underline r,B_i]}\left(Qd_i^0(Q)\right)=\max\left\{\frac{\underline r}{Q},\min\left\{d_i^0(Q),\frac{B_i}{Q}\right\}\right\}.
\]

This is Equation~\eqref{eq:static_demand}.
To connect the projection to strategic best responses, write the marginal payoff in share coordinates as
\begin{equation}\label{eq:proof_psi}
 \Psi_i(Q,x)=p(Q)(1-x/2)-c_i-\kappa_iQx.
\end{equation}

At a consistent profile, $\Psi_i(Q,x_i)$ equals the marginal payoff from changing $Q_i$ while holding opponents' stake fixed.
For a given candidate total, its slope in the share is
\[
 \frac{\partial\Psi_i(Q,x)}{\partial x}
 =-\frac{p(Q)}{2}-\kappa_iQ<0.
\]

Projection therefore gives a nonpositive marginal payoff at the lower bound, zero at an interior solution, or a nonnegative marginal payoff at the upper bound.
These are precisely the constrained optimality conditions, which are sufficient by strict concavity.
When $\underline r=B_i$, the projection gives the only feasible bid.

Continuity follows from the positive denominator of $d_i^0(Q)$ and continuity of the projection.
For monotonicity, rewrite the unconstrained share as
\[
 d_i^0(Q)=
 \frac{1-(c_i/\alpha)\sqrt Q}{1/2+(\kappa_i/\alpha)Q^{3/2}}.
\]

Where $p(Q)>c_i$, the numerator is positive and strictly decreasing, while the denominator is positive and nondecreasing.
Thus $d_i^0$ decreases on this region.
Where $p(Q)\le c_i$, the raw share is nonpositive and the reserve sets $d_i(Q)=\underline r/Q$.
The boundary branches $\underline r/Q$ and $B_i/Q$ also decrease, so the projected share is nonincreasing throughout $Q>0$.

Finally, if $\sum_{i\in A}d_i(Q)=1$, setting $Q_i=Qd_i(Q)$ gives
\[
 \sum_{i\in A}Q_i=Q\sum_{i\in A}d_i(Q)=Q,
 \qquad Q_{-i}=Q-Q_i.
\]

The induced bids are feasible and satisfy the strategic optimality conditions against these opponents' totals, hence form a quantity equilibrium.
Conversely, at any quantity equilibrium, the same marginal conditions imply $Q_i/Q=d_i(Q)$.
Summing these shares gives the clearing condition, completing the equivalence.
\qedhere
\end{proof}

\subsection{Proof of Theorem~\ref{thm: unique NE}}\label{section proof theorem unique}
\begin{proof}
We prove the result for fixed intervals $[a_i,b_i]$ with $0<a_i\le b_i<\infty$, which include the static constraints when $a_i=\underline r$ and $b_i=B_i$.
Use Equation~\eqref{eq:static_demand} with $a_i$ in place of $\underline r$ and $b_i$ in place of $B_i$, and define
\[
 D_A(Q)=\sum_{i\in A}d_i(Q),\qquad
 a=\sum_{i\in A}a_i,\qquad b=\sum_{i\in A}b_i.
\]

\textbf{Existence of a clearing total.}
Proposition~\ref{prop:price_demand} gives continuity of every demand share and hence of $D_A$.
The lower and upper bid constraints imply
\[
 \frac{a_i}{Q}\le d_i(Q)\le\frac{b_i}{Q},\qquad Q>0.
\]

Evaluating at the two aggregate bounds yields
\[
 D_A(a)\ge\sum_{i\in A}\frac{a_i}{a}=1,
 \qquad
 D_A(b)\le\sum_{i\in A}\frac{b_i}{b}=1.
\]

The intermediate value theorem therefore gives $Q_A\in[a,b]$ with $D_A(Q_A)=1$.
If $a=b$, the only feasible total already clears demand.
For the static constraints $a_i=\underline r$ and $b_i=B_i$, the resulting bound is $|A|\underline r\le Q_A\le\sum_{i\in A}B_i$.

\textbf{Correspondence with Nash equilibrium.}
Given a clearing total, set $Q_i^*=Q_A d_i(Q_A)$.
These bids satisfy
\[
 Q_i^*\in[a_i,b_i],\qquad
 \sum_{i\in A}Q_i^*=Q_A\sum_{i\in A}d_i(Q_A)=Q_A.
\]

Thus each validator faces the opponents' total $Q_{-i}^*=Q_A-Q_i^*$ used in the demand representation.
By Proposition~\ref{prop:price_demand}, its bid satisfies the constrained best-response conditions against those opponents.
The vector $\mathbf Q^*$ is therefore a Nash equilibrium.
Conversely, let $\widetilde{\mathbf Q}$ be any quantity equilibrium and write $\widetilde Q=\sum_{i\in A}\widetilde Q_i$.
The same constrained optimality conditions imply
\[
 \widetilde Q_i=\widetilde Q\,d_i(\widetilde Q),
 \qquad
 D_A(\widetilde Q)=\sum_{i\in A}\frac{\widetilde Q_i}{\widetilde Q}=1.
\]

Hence every Nash equilibrium corresponds to a clearing total.

\textbf{Uniqueness.}
Let $Q^{(1)}$ be a clearing total and consider any $Q^{(2)}>Q^{(1)}$.
For validator $i$, put $x=d_i(Q^{(1)})$.
Clearing and positive lower bounds give $0<x\le1$.
If the upper bound determines this share, then
\[
 x=\frac{b_i}{Q^{(1)}},\qquad
 d_i(Q^{(2)})\le\frac{b_i}{Q^{(2)}}<x.
\]

Otherwise, $x<b_i/Q^{(1)}$ and the lower-bound or interior optimality condition gives $\Psi_i(Q^{(1)},x)\le0$.
For this fixed share, Equation~\eqref{eq:proof_psi} satisfies
\[
 \frac{\partial\Psi_i(Q,x)}{\partial Q}
 =-\frac{\alpha}{2Q^{3/2}}\left(1-\frac{x}{2}\right)-\kappa_i x<0.
\]

It follows that $\Psi_i(Q^{(2)},x)<0$.
Since $\Psi_i(Q^{(2)},\cdot)$ is strictly decreasing, its zero satisfies $d_i^0(Q^{(2)})<x$.
The new lower-bound share also satisfies $a_i/Q^{(2)}<a_i/Q^{(1)}\le x$.
Consequently,
\[
 d_i(Q^{(2)})
 =\max\left\{\frac{a_i}{Q^{(2)}},
 \min\left\{d_i^0(Q^{(2)}),\frac{b_i}{Q^{(2)}}\right\}\right\}<x.
\]

Thus every validator's demanded share is strictly smaller at $Q^{(2)}$ than at the clearing total $Q^{(1)}$.
Summing gives
\[
 D_A(Q^{(2)})<\sum_{i\in A}d_i(Q^{(1)})=1.
\]

No second clearing total can therefore exist.
The unique total determines each bid through $Q_i^*=Q_A d_i(Q_A)$, proving uniqueness of the Nash equilibrium.
\qedhere
\end{proof}

\section{Further Analysis of Identity Restructuring}

\subsection{Quantity Reduction and Reset}\label{appendix:quantity_reset_proof}
\begin{proof}[Lemma~\ref{lem:reduction} and Corollary~\ref{thm:reset}]
Summing the per-identity minimum stakes gives $L_i\underline r\le Q_i$, and the validator budget gives $Q_i\le B_i$.
With a cap, the per-identity upper bounds also require $Q_i\le L_i\bar r$.
Conversely, every total in $[a_i,b_i^{L_i}]$ can be allocated equally, with each identity receiving $Q_i/L_i$ and satisfying the applicable balance bounds.
The payoff depends on the allocation only through $Q_i$.
This proves the reduction and the exact coverage of continuous allocations.

With counts fixed, the reduced payoff has the same form as the static payoff, with interval $[a_i,b_i^{L_i}]$.
The count menu ensures a positive, finite, nonempty interval in both settings, using $\bar r\ge\underline r$ when a cap is imposed.
The arbitrary-interval proof of Theorem~\ref{thm: unique NE} therefore applies.
Solving its marginal equation gives Equation~\eqref{eq:quadratic_reset}, and aggregate clearing and uniqueness follow.
There is no quantity choice at the all-zero profile, whose payoff vector is defined as zero.
\qedhere
\end{proof}

\subsection{Symmetric Quantities and Identity Restructuring}\label{appendix:symmetric_proofs}

\begin{proof}[Proposition~\ref{prop:symmetric}]
Relabeling any two validators with identical parameters and the same count $\ell$ leaves the quantity game unchanged.
By Corollary~\ref{thm:reset}, its unique equilibrium must be invariant to such relabeling, so all validators choose the same total quantity.
At a common quantity $q$, the marginal payoff is $F_{n,\ell}(q)$ from Equation~\eqref{eq:symmetric_marginal}.
This function strictly decreases with $q$, so its endpoint signs or unique interior zero give the quantity stated in the proposition.
Substitution gives aggregate stake, identity count, and validator payoff.

After one validator changes to a feasible $k>0$, relabeling any two of the $n-1$ non-deviating validators still leaves the new game unchanged.
Uniqueness therefore gives a common opponent quantity, yielding Equation~\eqref{eq:asymmetric_reset}.
The same calculation applies to a deviation by any incumbent.
The capped extension in Section~\ref{solution section} follows from the same argument with the corresponding upper bounds.
\qedhere
\end{proof}

Changing the identity count affects a validator through its stake bounds.
At unchanged validator totals, rewards and costs do not change.
A split can therefore leave the quantity equilibrium unchanged: the new bounds may neither force a different stake nor create a profitable quantity deviation.

To see this, suppose the original symmetric equilibrium has an interior quantity $q$ per validator.
If validator $i$ changes from $\ell$ to $k$ identities and
\[
 k\underline r\le q\le b_i^k,
\]
then $q$ remains its best response to the original opponents' stakes.
The other validators also face the same quantities as before, so none needs to adjust.
The original profile remains a quantity equilibrium and, by Corollary~\ref{thm:reset}, is the unique reset.
Total stake and payoffs are unchanged, while $D_{\mathrm{op}}$ changes by $k-\ell$.

When an original bound binds, however, keeping $q$ feasible is not enough.
For example, relaxing a binding cap can make additional stake profitable, so the quantity equilibrium must be recomputed.

\subsection{Calculations for Examples}\label{appendix:examples}

The comparison holds the economic parameters fixed and varies the per-identity cap.
For each cap, we solve the quantity game before and after one validator changes from one identity to two, then compare that validator's continuation payoffs.
We identify when an adopted split improves $(S,D_{\mathrm{op}})$ and finally check whether further restructuring remains profitable and whether the outcome satisfies Assumption~\ref{assumption 1}.

\paragraph{Step 1: Specify the split.}
We consider $n\in\{2,3\}$ economic validators, initially operating one identity each.
Both cases use $\alpha=13/6$, $c=1$, $\kappa=3/20$, $B=2$, and $\underline r=5/8$.
These parameters define the theoretical examples considered below.
They give $C(q)=q+3q^2/40$ and the full count menu $\{0,1,2,3\}$.
Below, $V(q;R)=\alpha q/\sqrt{q+R}-C(q)$ denotes a validator's payoff when opponents stake $R$.
Only one validator changes its count to two, while opponents keep their identity counts and re-optimize their stakes.
The split raises the deviator's minimum total stake from $\underline r=5/8$ to $Y=2\underline r=5/4$.

\paragraph{Step 2: Re-solve stakes before and after the split.}
The additional identity raises the minimum commitment, but all validators still choose their stakes optimally under the new bounds.
Define the marginal payoff
\[
 G(q;R)=\frac{\alpha(R+q/2)}{(q+R)^{3/2}}-1-\frac{3q}{20}.
\]

It strictly decreases in $q$.
Its reward term is maximized over $R\ge0$ at $R=q/2$, so
\[
 G(q;R)\le\frac{\alpha}{(3/2)^{3/2}\sqrt q}-1-\frac{3q}{20}<0
 \qquad\text{for }q\ge\frac54.
\]

Thus a validator with two or three identities stakes its minimum total, $5/4$ or $15/8$, respectively.

Write $t=\bar r$.
For $n\in\{2,3\}$ validators initially operating one identity each, the uncapped quantity $q_n^0$ solves $G(q_n^0;(n-1)q_n^0)=0$.
After one validator splits, write $x_n^*$ for each opponent's common quantity (not a stake share), which solves $G(x_n^*;Y+(n-2)x_n^*)=0$.
Strict concavity and the unique quantity reset give
\[
\begin{aligned}
 &(q_2^0,x_2^*)\simeq(0.9987,0.9548),
 \qquad(q_3^0,x_3^*)\simeq(0.8539,0.7741),\\
 &q_n^{\mathrm{pre}}(t)=\min\{t,q_n^0\},
 \qquad(q_n^{\mathrm{split}}(t),q_n^{\mathrm{opponent}}(t))
 =\left(Y,\min\{t,x_n^*\}\right).
\end{aligned}
\]

These quantities satisfy all lower bounds for the plotted range $0.65\le t\le1.1$.
Both uncapped quantities lie below $Y$, so the split raises the deviator's stake throughout the cap range studied here.
The splitter's lower bound binds, while the opponents are capped exactly when $t<x_n^*$.

\paragraph{Step 3: Compare the splitting validator's payoffs.}
The two quantity equilibria determine the private gain from splitting.
Both the deviator's higher stake and the opponents' quantity responses enter this comparison.
Write $x_n(t)=\min\{t,x_n^*\}$ for each opponent's post-split quantity.
The one-validator split gain is
\begin{equation}\label{eq:cap_benchmark}
 \Delta_n(t)=\frac{\alpha Y}{\sqrt{Y+(n-1)x_n(t)}}-C(Y)
 -\left[\alpha\sqrt{\frac{q_n^{\mathrm{pre}}(t)}{n}}-C(q_n^{\mathrm{pre}}(t))\right].
\end{equation}

Without a cap, the resulting gains are $\Delta_2^0\simeq-0.0008$ and $\Delta_3^0\simeq0.0046$.
Hence the uncapped outcomes after the split decision are $(S,D_{\mathrm{op}})\simeq(1.9974,2)$ for $n=2$ and approximately $(2.7981,4)$ for $n=3$.

For Example~\ref{ex:reset_split_flip}, $n=2$ and $t=0.8$ give pre-split quantities $(0.8,0.8)$ and post-split quantities $(5/4,0.8)$.
The validator gains approximately $0.0021$ from this one-step split, producing $(S,D_{\mathrm{op}})=(41/20,3)$.
Table~\ref{tab:cap_count_audit} collects the payoff comparisons for both examples.

\paragraph{Step 4: Identify the useful cap range.}
Private adoption alone does not ensure that both protocol coordinates improve.
We first identify caps with a positive split gain, then compare the adopted outcomes with the uncapped benchmark.
Over the displayed range $t\in[0.65,1.1]$, the adoption conditions are
\begin{equation}\label{eq:cap_cutoff}
\begin{aligned}
 \Delta_2(t)>0&\quad\Longleftrightarrow\quad t<\tau_2,\\
 \Delta_3(t)>0&\quad\Longleftrightarrow\quad t<\tau_3^-\ \text{or}\ t>\tau_3^+,
\end{aligned}
\end{equation}
where $\tau_2\simeq0.8108$, $\tau_3^-\simeq0.6559$, and $\tau_3^+\simeq0.8438$.

To obtain the cap thresholds, let $f_n(t)$ denote Equation~\eqref{eq:cap_benchmark} on $[5/8,x_n^*]$.
Both pre-split validators and post-split opponents are capped on this interval, yielding
\[
 f_n'(t)=-\frac{\alpha Y(n-1)}{2[Y+(n-1)t]^{3/2}}
 -\frac{\alpha}{2\sqrt{nt}}+1+\frac{3t}{20},
\]
\[
 f_n''(t)=\frac{3\alpha Y(n-1)^2}{4[Y+(n-1)t]^{5/2}}
 +\frac{\alpha}{4\sqrt n\,t^{3/2}}+\frac3{20}>0.
\]

The endpoint derivatives are $f_2'(x_2^*)\simeq-0.0544$ and $f_3'(x_3^*)\simeq-0.1734$.
Thus both gains strictly decrease on their first branches.
They start positive at $5/8$ and become negative at $x_n^*$, giving one zero on each branch.
For $x_n^*\le t\le q_n^0$, the post-split payoff is constant while the symmetric pre-split payoff decreases.
The gain therefore increases to its uncapped value.
That value is negative for $n=2$ and positive for $n=3$, producing the second three-validator zero.
Above $q_n^0$, the gains remain constant.
Root brackets, with endpoints rounded outward to four decimal places, are
\[
\begin{aligned}
 0.8107&<\tau_2<0.8108,\\
 0.6558&<\tau_3^-<0.6559,\\
 0.8437&<\tau_3^+<0.8438.
\end{aligned}
\]

This proves the adoption regions in Equation~\eqref{eq:cap_cutoff}, with no split at the zero-gain boundaries.
For an adopted two-validator split, $t<\tau_2<x_2^*$ gives $S=Y+t$ and $D_{\mathrm{op}}=3$.
Relative to its uncapped outcome $(S_2^0,2)$, where $S_2^0=2q_2^0$, both coordinates strictly improve exactly when $S_2^0-Y<t<\tau_2$.
For $n=3$, the low-cap split region has total stake below the uncapped post-split benchmark and the same identity count.
The intermediate no-split region has three identities, and the high-cap split region reproduces the uncapped coordinates.

\paragraph{Step 5: Check admissibility and remaining deviations.}
A profitable split need not be a terminal equilibrium, and resource feasibility does not guarantee the validator-share condition.
We therefore check both the shares at the resulting profiles and the complete menu of unilateral identity-count choices.

\noindent\textit{Validator shares.}
The reserve remains $5/8$, but the plotted cap starts at $0.65$.
At the former endpoint $t=5/8$, a two-validator split gives totals $(5/4,5/8)$, so the splitter controls $2/3$ of all stake and fails Assumption~\ref{assumption 1}.
At $t=0.65$, the corresponding totals are $(5/4,13/20)$ and the splitter's share is
\[
 \frac{5/4}{5/4+13/20}=\frac{25}{38}<\frac23.
\]

For $0.65\le t\le x_2^*$ this share decreases with $t$, and for larger caps the post-split quantities are constant.
The two-validator pre-split shares are $1/2$.
In the three-validator case, pre-split shares are $1/3$, and all post-split shares are below $1/2$ on the displayed range.
Thus every profile used for the plotted one-split comparisons is admissible.
The cap-$0.8$ example is unchanged: its splitter controls $25/41<2/3$ of total stake.
The former endpoint remains useful only as an algebraic boundary calculation.

\noindent\textit{Remaining count deviations.}
Throughout $S_2^0-Y<t<\tau_2$, the two-validator post-split profile is a pure equilibrium and satisfies Assumption~\ref{assumption 1}.
The splitter prefers two identities to one by $\Delta_2(t)>0$, and its payoff is positive.
A change to three identities raises its quantity to $15/8$ while the opponent stays capped, since the opponent's unconstrained response to $15/8$ is approximately $0.8125>\tau_2$.
Its payoff decreases over that quantity interval because $G(q;R)<0$ for $q\ge5/4$.
The other validator's current payoff increases with $t<x_2^*$ and is at least $V(S_2^0-Y;Y)\simeq0.3565$.
This exceeds its payoff with two identities, $V(5/4;5/4)\simeq0.3457$, while three identities yield less and exit gives zero.

Table~\ref{tab:cap_count_audit} checks every resource-feasible count in the restructuring game and separately flags inadmissible post-deviation profiles.
For $n=3$ without a cap, the splitter prefers two identities, while each of the other validators prefers one.
Thus the resulting profile $(2,1,1)$ is a post-reset equilibrium, as is the capped two-validator profile in Example~\ref{ex:reset_split_flip}.

\begin{table}[htbp]
\centering

\caption{Complete count comparisons for Examples~\ref{ex:uncapped_restructuring} and~\ref{ex:reset_split_flip}.
Each column fixes the other validators' counts at the stated outcome and reports this validator's payoff after choosing $k$ and resetting quantities.
Payoffs are rounded to four decimal places, and bold entries mark the chosen count.
The two nonsplitters in the uncapped $n=3$ outcome have identical payoffs.
A dagger marks a deviation whose reset violates Assumption~\ref{assumption 1}.
Such entries are formal calculations, not payoffs for normal protocol operation.}
\label{tab:cap_count_audit}
\begin{tabular}{crrrrr}
\toprule
$k$ & \shortstack{No cap, $n=2$\\Either validator}
 & \shortstack{No cap, $n=3$\\Splitter}
 & \shortstack{No cap, $n=3$\\Nonsplitter}
 & \shortstack{Cap $0.8$, $n=2$\\Splitter}
 & \shortstack{Cap $0.8$, $n=2$\\Nonsplitter}\\
\midrule
0 & $0.0000^{\dagger}$ & 0.0000 & 0.0000 & $0.0000^{\dagger}$ & $0.0000^{\dagger}$\\
1 & \textbf{0.4576} & 0.2473 & \textbf{0.1836} & 0.5223 & \textbf{0.3626}\\
2 & 0.4568 & \textbf{0.2519} & 0.1601 & \textbf{0.5244} & 0.3457\\
3 & $0.3394^{\dagger}$ & 0.1484 & -0.0408 & $0.3452^{\dagger}$ & 0.1594\\
\bottomrule
\end{tabular}
\end{table}

Figure~\ref{fig:split_incentive_stability} evaluates one unilateral split at each cap, rather than iterating all possible restructurings.
Some very tight-cap two-validator outcomes admit a further profitable split by the other validator and are not terminal post-reset equilibria.
For example, at $t=0.65$, that further split gains approximately $0.0057$, and both its pre- and post-deviation profiles satisfy Assumption~\ref{assumption 1}.
The complete-menu checks retain this distinction, while the beneficial two-validator interval and the representative uncapped outcomes have the equilibrium properties stated above.
Scalar market clearing and direct validator best responses independently verify the quantity resets used in the figure.

\subsection{Proof of Theorem~\ref{thm:post_reset_partition_equilibrium}}\label{section proof post reset partition equilibrium}
\begin{proof}
In the restructuring game of Definition~\ref{def:restructuring-game}, each action set $\mathcal L_i(\underline r)$ is finite and nonempty, and every action satisfies the resource constraints.
Corollary~\ref{thm:reset} assigns a unique quantity and formal payoff vector to each resource-feasible count profile.
The restructuring game is consequently a finite normal-form game.

For completeness, let $\Sigma_i$ be the probability simplex over $\mathcal L_i(\underline r)$.
Expected payoff is continuous on $\prod_i\Sigma_i$ and affine in each validator's own probabilities.
The mixed best-response correspondence has nonempty compact convex values and a closed graph.
Kakutani's fixed-point theorem therefore gives a mixed Nash equilibrium.
Because validators observe the count profile before choosing stakes, every profile defines a quantity subgame.
Combining an equilibrium of the restructuring game with the prescribed quantity equilibrium after every profile establishes subgame perfection.
Equal allocation across a validator's identities can be used as the continuation selection without changing any payoff.

The action $L_i=0$ gives payoff zero for every profile of the opponents' counts.
A pure equilibrium therefore gives every validator a nonnegative payoff, while a mixed equilibrium gives nonnegative expected payoff.
This does not require every realized count profile in a mixed equilibrium to be individually rational.
Since every feasible continuous allocation has a count covered by Lemma~\ref{lem:reduction}, the count comparisons exhaust unilateral structural deviations in this two-stage game.
\qedhere
\end{proof}

\subsection{Designer Objective and Equilibrium Selection}\label{section proof protocol frontier selection}
For every design in $\Theta$, Theorem~\ref{thm:post_reset_partition_equilibrium} gives a nonempty mixed-equilibrium set.
Fix a rule $\eta$ that selects an equilibrium for each design.
Every count profile has finite coordinates and payoff, so the expectation in Equation~\eqref{eq:protocol_problem} is a finite sum with a well-defined value.
There are finitely many designs, so the largest of these values is attained.
This observation specifies the designer's optimization once its preference and the response selection are given.
It does not supply a numerical optimum without those inputs.

If the selected equilibria are pure, let $\mathcal C_\eta$ be their finite set of coordinate pairs.
A pair dominated by another pair in $\mathcal C_\eta$ cannot maximize a strictly increasing $\Phi$.
Hence an optimal design selects a Pareto-efficient pair within $\mathcal C_\eta$.
With mixed responses, the objective is expected $\Phi$ over realized outcomes, generally different from $\Phi$ evaluated at the expected coordinates.
The pure-outcome statement is not asserted for every realization of a mixed equilibrium.

The numerical procedure fixes candidate ordering and initialization and reports a selected pure equilibrium when its complete-menu audit succeeds.
Such a certificate verifies the absence of profitable unilateral count deviations at that outcome.
It does not imply that the equilibrium is unique or that a different selection rule would produce identical protocol coordinates.

\subsection{Reserve-Binding Participation}\label{appendix:symmetric_details}
This subsection studies the branch in which every participating validator commits the reserve through one identity.
More participants reduce the reward per unit, while a higher reserve can raise each participant's average funding cost.
Equilibrium requires incumbents to be willing to stay and entry to be unprofitable.
The characterization below also checks deviations to any feasible identity count.

Suppose all potential validators share $(c,\kappa)$ and satisfy $B_i\ge\underline r$, with possibly different budgets and count menus.
We consider profiles in which each active validator operates one identity and stakes $\underline r$.
Define average cost and the break-even aggregate by
\[
 \overline c(\underline r)=c+\frac{\kappa \underline r}{2},\qquad
 T(\underline r)=\left[\frac{\alpha}{\overline c(\underline r)}\right]^2.
\]

\begin{theorem}[Exact reserve-binding characterization]\label{thm:branch}
Fix $2\le m<M$ and suppose
\begin{equation}\label{eq:marginal_condition}
 p(m\underline r)\left(1-\frac{1}{2m}\right)\le c+\kappa \underline r.
\end{equation}

A profile with $m$ active validators, each operating one identity and staking $\underline r$, is a pure equilibrium against all feasible count deviations if and only if
\[
 m\underline r\le T(\underline r)\le(m+1)\underline r.
\]

For non-integer $T(\underline r)/\underline r$, the participant count in this branch is uniquely
\[
 \lfloor T(\underline r)/\underline r\rfloor.
\]
At integer targets, both adjacent counts can satisfy the interval, subject to the population and marginal conditions.
\end{theorem}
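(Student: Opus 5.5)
The plan is to verify each inequality of the interval as one unilateral count deviation, with every quantity computed through the reset of Corollary~\ref{thm:reset}. The tool throughout is the identity
\[
V_i=Q_i\bigl(p(Q)-c-\tfrac{\kappa}{2}Q_i\bigr),
\]
which, for a reserve-binding stake $Q_i=\underline r$, reads $\underline r\,\bigl(p(Q)-\overline c(\underline r)\bigr)$. Since $p(z)\ge\overline c(\underline r)$ holds exactly when $z\le T(\underline r)$, every sign question reduces to comparing an aggregate with $T(\underline r)$.

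\textbf{Step 1 (the candidate is the reset).} With $m$ validators each holding one identity, I will check that the stake profile $(\underline r,\dots,\underline r)$ is the quantity equilibrium. At this profile each validator faces $Q_{-i}=(m-1)\underline r$. Its marginal payoff at $\underline r$ is $p(m\underline r)(1-\tfrac1{2m})-c-\kappa\underline r$, which is $\le0$ by \eqref{eq:marginal_condition}. By Proposition~\ref{prop opt} the lower bound is therefore each validator's best response. Uniqueness from Theorem~\ref{thm: unique NE} and Corollary~\ref{thm:reset} then identifies this profile as the reset. The incumbent payoff is $\underline r\,(p(m\underline r)-\overline c)$, so incumbents weakly prefer staying to exit ($k=0$) exactly when $m\underline r\le T(\underline r)$. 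This settles the left inequality, in both directions.

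\textbf{Step 2 (single-identity entry).} An outsider exists because $m<M$, and it is feasible because $B_i\ge\underline r$. I must show that after its entry the reset is again $\underline r$ for all $m+1$ validators. The main obstacle is that the factor $1-\tfrac1{2m}$ increases in $m$, so the marginal condition does not carry over trivially. I will prove it using the fact that $g(m)=(1-\tfrac1{2m})/\sqrt m$ is decreasing for $m\ge 3/2$; its derivative is proportional to $-\tfrac12 m^{-3/2}+\tfrac34 m^{-5/2}$. This gives
\[
p((m+1)\underline r)\Bigl(1-\tfrac1{2(m+1)}\Bigr)\le p(m\underline r)\Bigl(1-\tfrac1{2m}\Bigr)\le c+\kappa\underline r,
\]
so the all-$\underline r$ profile is again the unique reset. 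The entrant's payoff is $\underline r\,(p((m+1)\underline r)-\overline c)$. This is $\le0$, so the entrant weakly prefers staying out, iff $(m+1)\underline r\ge T(\underline r)$. This settles the right inequality, in both directions, and it is where $m<M$ and $m\ge2$ are used.

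\textbf{Step 3 (multi-identity deviations) and uniqueness.} Consider any deviation, by an incumbent or by an outsider, to $k\ge2$ identities. The deviator's total is forced to some $Q_i\ge k\underline r$, and every other active validator still stakes at least $\underline r$. Hence the new aggregate is at least $(m+1)\underline r\ge T(\underline r)$, and so $p\le\overline c(\underline r)\le c+\tfrac{\kappa}{2}Q_i$. The identity above then gives a deviation payoff $\le0$, which is at most the incumbent's nonnegative payoff from Step 1 and at most the outsider's zero payoff. No explicit reset computation is needed, because only the lower bounds on stakes enter this estimate. Together with Steps 1 and 2, this gives the ``if and only if''. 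Finally, the interval $m\underline r\le T\le(m+1)\underline r$ pins down $m=\lfloor T/\underline r\rfloor$ when $T/\underline r\notin\mathbb Z$. At integer values both $m=T/\underline r$ and $m=T/\underline r-1$ satisfy the interval, and each is admissible only if $2\le m<M$ and \eqref{eq:marginal_condition} hold for that $m$.
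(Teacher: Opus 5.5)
Your proof is correct and follows the paper's own decomposition. Exit gives $m\underline r\le T(\underline r)$. Single-identity entry gives $T(\underline r)\le(m+1)\underline r$, using that $p(z)\bigl(1-\underline r/(2z)\bigr)$ decreases for $z\ge 2\underline r$ so the marginal condition carries from $m$ to $m+1$ validators. Every deviation to $k\ge2$ identities is unprofitable because the aggregate is at least $(m+1)\underline r\ge T(\underline r)$ while the deviator's average cost is at least $\overline c(\underline r)$.

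The one difference is that the paper first verifies the exact post-deviation reset for every $k$ (deviator at $k\underline r$, all others at $\underline r$). Your Step~3 instead uses only the feasibility bounds $Q_i\ge k\underline r$ and $Q_j\ge\underline r$, which is a slightly leaner but equally valid route.
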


\begin{corollary}[Selected branch outcome]\label{cor:floor}
Let $n=\lfloor T(\underline r)/\underline r\rfloor$ with $2\le n<M$.
The marginal condition holds automatically, and any $n$ validators form a pure post-reset equilibrium with security and participation
\[
 S(\underline r)=n\underline r,\qquad D_{\mathrm{op}}=n,
 \qquad T(\underline r)-\underline r<S(\underline r)\le T(\underline r).
\]
\end{corollary}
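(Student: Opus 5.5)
The proof verifies the hypotheses of Theorem~\ref{thm:branch} for $m=n=\lfloor T(\underline r)/\underline r\rfloor$ and then reads off the coordinates. By definition of the floor,
\[
 n\underline r\le T(\underline r)<(n+1)\underline r .
\]
This already gives the interval condition $m\underline r\le T(\underline r)\le(m+1)\underline r$ of Theorem~\ref{thm:branch} with $m=n$. The hypothesis $2\le n<M$ is exactly the population requirement $2\le m<M$ there. All potential validators share $(c,\kappa)$ and satisfy $B_i\ge\underline r$, so the theorem's setting applies to any choice of $n$ of them as the active set; this is what yields the ``any $n$ validators'' clause.

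The substantive step is showing that the marginal condition \eqref{eq:marginal_condition} holds automatically. First, $\overline c(\underline r)=\alpha/\sqrt{T(\underline r)}$ by definition of $T$. Using the strict upper bound $n\underline r>T\,n/(n+1)$, we get
\[
 p(n\underline r)=\frac{\alpha}{\sqrt{n\underline r}}<\frac{\alpha}{\sqrt T}\sqrt{\frac{n+1}{n}}=\overline c(\underline r)\sqrt{1+\tfrac1n}.
\]
Since $\overline c(\underline r)=c+\kappa\underline r/2\le c+\kappa\underline r$, it suffices to show
\[
 \sqrt{1+x}\,\Bigl(1-\frac{x}{2}\Bigr)\le1\qquad\text{for }x=\tfrac1n\in(0,\tfrac12].
\]
Squaring, $(1+x)(1-x/2)^2=1-\tfrac34x^2+\tfrac14x^3=1-\tfrac{x^2}{4}(3-x)\le1$ for $0<x\le1$. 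Hence \eqref{eq:marginal_condition} holds, indeed strictly. This elementary inequality is the only point requiring care. The direction matters: the upper bound $T<(n+1)\underline r$ is what controls $p(n\underline r)$ from above, while the lower bound $n\underline r\le T$ is used only for the interval condition.

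With both hypotheses verified, Theorem~\ref{thm:branch} shows that the profile in which $n$ validators each operate one identity staking $\underline r$ is a pure equilibrium against all feasible count deviations, that is, a pure post-reset equilibrium in the sense of Definition~\ref{def:separation_combination}. Its coordinates follow from Definitions~\ref{security} and~\ref{decentralization}: $S(\underline r)=n\underline r$ and $D_{\mathrm{op}}=n$. Multiplying the floor inequalities by $\underline r$ gives $T(\underline r)-\underline r<S(\underline r)\le T(\underline r)$, which completes the proof.
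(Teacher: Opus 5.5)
Your proof is correct and follows the paper's own argument. The floor gives the interval condition of Theorem~\ref{thm:branch}, the strict bound $T(\underline r)<(n+1)\underline r$ controls $p(n\underline r)$ to verify the marginal condition, and the coordinates are then read off. The only cosmetic difference is in the elementary inequality: the paper bounds $\sqrt{1+1/n}\le 1+\frac{1}{2n}$ and uses $(1+\frac{1}{2n})(1-\frac{1}{2n})<1$, whereas you square $\sqrt{1+x}\,(1-x/2)$ directly; both work.
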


For $q>0$ and opponents' total $Q_{-i}\ge0$, define the marginal reward
\[
 \mathcal M(q,Q_{-i})=p(q+Q_{-i})\left(1-\frac{q}{2(q+Q_{-i})}\right).
\]

It strictly decreases with $q$.
For a validator at quantity $\underline r$, define $g(z)=p(z)(1-\underline r/(2z))$.
Then
\[
 g'(z)=-\frac{p(z)}{2z}\left(1-\frac{3\underline r}{2z}\right)<0
 \qquad\text{for }z\ge2\underline r.
\]

These comparisons will verify the quantities after each deviation, rather than presume that opponents hold their stakes fixed.

\begin{proof}[Theorem~\ref{thm:branch}]
At the proposed profile, every active validator has quantity $\underline r$ and total stake is $m\underline r$.
Equation~\eqref{eq:marginal_condition} is its lower-bound optimality condition, so the profile is the unique conditional quantity equilibrium.
Each incumbent's payoff is $\underline r[p(m\underline r)-\overline c(\underline r)]$.

Suppose an incumbent changes its identity count to a feasible $k\ge1$.
Propose that it stakes $k\underline r$ and all other $m-1$ incumbents stake $\underline r$.
Total stake is $z=(m-1+k)\underline r\ge m\underline r\ge2\underline r$.
The other validators' marginal rewards are $g(z)\le g(m\underline r)\le c+\kappa \underline r$, so their lower bounds are best responses.
For the deviator,
\[
 \mathcal M(k\underline r,(m-1)\underline r)\le \mathcal M(\underline r,(m-1)\underline r)=g(m\underline r)
 \le c+\kappa \underline r\le c+\kappa k\underline r.
\]

Its lower bound is also optimal.
Thus these proposed quantities constitute the unique post-deviation reset.
An inactive validator entering with $k\ge1$ identities similarly induces quantities $k\underline r$ for itself and $\underline r$ for the $m$ incumbents.
For this entrant, use
$\mathcal M(k\underline r,m\underline r)\le \mathcal M(\underline r,m\underline r)=g((m+1)\underline r)\le g(m\underline r)$.
The incumbents' marginal conditions follow from the same monotonicity of $g$.

Necessity follows by considering exit and entry with a single identity.
An incumbent must weakly prefer its payoff to zero, which is equivalent to $m\underline r\le T(\underline r)$.
There is at least one inactive validator since $m<M$.
Its payoff from entering with one key is $\underline r[p((m+1)\underline r)-\overline c(\underline r)]$, so no profitable entry requires $T(\underline r)\le(m+1)\underline r$.

For sufficiency, assume $m\underline r\le T(\underline r)\le(m+1)\underline r$.
Incumbents have nonnegative payoffs and cannot gain by exit.
For every $k\ge1$,
\[
 C(k\underline r)=ck\underline r+\frac{\kappa k^2\underline r^2}{2}
       \ge k\left(c\underline r+\frac{\kappa \underline r^2}{2}\right)=kC(\underline r).
\]

An incumbent choosing $k\ge2$ faces $z=(m-1+k)\underline r\ge(m+1)\underline r\ge T(\underline r)$ and earns at most
\[
 k\underline r p(z)-kC(\underline r)=k\underline r[p(z)-\overline c(\underline r)]\le0.
\]

This does not exceed its current nonnegative payoff.
The count $k=1$ leaves the game unchanged.
Every entrant has $z=(m+k)\underline r\ge(m+1)\underline r\ge T(\underline r)$ and also earns at most zero.
All feasible counts have been checked, proving sufficiency.
Finally, the interval $m\underline r\le T(\underline r)\le(m+1)\underline r$ has the stated integer solutions, subject to the maintained marginal condition and population bounds.
\qedhere
\end{proof}

\begin{proof}[Corollary~\ref{cor:floor}]
The definition $n=\lfloor T(\underline r)/\underline r\rfloor$ gives $n\underline r\le T(\underline r)<(n+1)\underline r$.
Moreover,
\[
\begin{aligned}
 p(n\underline r)\left(1-\frac1{2n}\right)
 &=\overline c(\underline r)\sqrt{\frac{T(\underline r)}{n\underline r}}\left(1-\frac1{2n}\right)\\
 &<\overline c(\underline r)\left(1+\frac1{2n}\right)\left(1-\frac1{2n}\right)
 <\overline c(\underline r)\le c+\kappa\underline r.
\end{aligned}
\]

Thus the marginal condition holds, and Theorem~\ref{thm:branch} yields the equilibrium and the stated rounding bound.
\qedhere
\end{proof}

Within this branch, $T(\underline r)$ decreases with the reserve when $\kappa>0$.
Conditional on an unchanged participant count, total stake $S=n\underline r$ increases with the reserve.
Changes in the equilibrium count can instead produce downward jumps, so total stake need not be monotone.
This characterization concerns reserve-binding profiles with one identity per participant and does not identify the participating validators or rule out other identity structures.

\section{Algorithms}
\subsection{Static Algorithm}\label{appendix:static algorithm}
For a fixed nonempty active set $A$, Algorithm~\ref{algorithm 1} iteratively adjusts validator stakes within the bounds $[a_i,b_i]$.
Following~\citet{maheswaran2003nash}, validator $i$ uses its current allocation share $x_{i,t}=Q_{i,t}/\sum_{j\in A}Q_{j,t}$ to form a target stake.
Write $\mathcal Q_i(x)=y_i(x)^2$, where $y_i(x)>0$ solves
\[
 \kappa_i x y_i(x)^3+c_i y_i(x)=\alpha(1-x/2).
\]

The next bid moves toward $x_{i,t}\mathcal Q_i(x_{i,t})$ and is projected onto the feasible interval:
\begin{equation}\label{update bid}
 Q_{i,t+1}=\operatorname{proj}_{[a_i,b_i]}
 \left(\lambda x_{i,t}\mathcal Q_i(x_{i,t})+(1-\lambda)Q_{i,t}\right),
 \qquad 0<\lambda\le1.
\end{equation}

Here $a_i=\underline r$ and $b_i=B_i$ in the static one-identity game.
The same update accommodates the validator-specific bounds following an identity change.

\begin{algorithm}[H]
\caption{Decentralized bidding algorithm}\label{algorithm 1}
\begin{algorithmic}[1]
\Procedure{DecenBid}{$A,\{a_i,b_i\},\alpha,\{c_i,\kappa_i\},\lambda,\epsilon$}
 \State Initialize $Q_{i,0}\gets a_i$ for $i\in A$ and zero otherwise
 \State $t\gets0$, $\text{max\_diff}\gets\infty$
 \While{$\text{max\_diff}>\epsilon$}
  \State $Q_t\gets\sum_{j\in A}Q_{j,t}$
  \For{each $i\in A$}
   \State $x_{i,t}\gets Q_{i,t}/Q_t$
   \State Update $Q_{i,t+1}$ using Equation~\eqref{update bid}
  \EndFor
  \State $\text{max\_diff}\gets\max_{i\in A}|Q_{i,t+1}-Q_{i,t}|$
  \State $t\gets t+1$
 \EndWhile
 \State \Return final stakes and payoffs from Equation~\eqref{utility1}
\EndProcedure
\end{algorithmic}
\end{algorithm}

Participation remains fixed during this iteration, with entry and exit handled by the dynamic algorithm.
An exact fixed point satisfies the constrained optimality conditions, but uniqueness of the quantity equilibrium does not imply convergence of the iteration.
The reported simulations solve the same equilibrium through the scalar clearing equation using Brent's method.

\subsection{Dynamic Algorithm}\label{appendix:dynamic algorithm}
The dynamic algorithm accepts identity-count changes that improve the deviating validator's payoff after stakes are re-optimized.
Let $\mathcal P=(\alpha,\underline r,\bar r,\{B_i,c_i,\kappa_i\}_i)$ collect the model parameters.
For counts $\bm L$, \textsc{Reset} returns the quantity equilibrium and payoffs under
\[
 L_i\underline r\le Q_i\le\min\{B_i,L_i\bar r\}.
\]

The all-inactive profile returns zero quantities and payoffs.
We first consider entry with one identity, exit, and one-identity increases (the routine candidates).
Only if none improves payoff by more than $\epsilon$ do we consider one-identity reductions (the merge candidates).
All candidates must belong to the feasible count menus in Section~\ref{subsec:reset}.

\begin{algorithm}[H]
\caption{Post-reset identity adjustment}\label{algorithm 2}
\begin{algorithmic}[1]
\Procedure{ResetDynamics}{$\bm L^0,\mathcal P,\epsilon$}
 \State $\bm L\gets\bm L^0$, $\text{improved}\gets\text{true}$
 \While{$\text{improved}$}
  \State $(\mathbf Q,\mathbf V)\gets\Call{Reset}{\bm L,\mathcal P}$
  \State $\text{best\_gain}\gets0$, $\text{best\_counts}\gets\bm L$
  \For{phase in $\{\text{routine},\text{merge}\}$, in this order}
   \For{each feasible count change $(i,k)$ in the phase}
    \State $\bm L'\gets(k,\bm L_{-i})$
    \State $(\mathbf Q',\mathbf V')\gets\Call{Reset}{\bm L',\mathcal P}$
    \If{$V_i'-V_i>\text{best\_gain}$}
     \State $\text{best\_gain}\gets V_i'-V_i$, $\text{best\_counts}\gets\bm L'$
    \EndIf
   \EndFor
   \If{$\text{best\_gain}>\epsilon$}
    \State \textbf{break} the phase loop
   \EndIf
  \EndFor
  \State $\text{improved}\gets(\text{best\_gain}>\epsilon)$
  \If{$\text{improved}$}
   \State $\bm L\gets\text{best\_counts}$
  \EndIf
 \EndWhile
 \State \Return $\bm L,\mathbf Q,\mathbf V$, $S=\sum_iQ_i$, and $D_{\mathrm{op}}=\sum_iL_i$
\EndProcedure
\end{algorithmic}
\end{algorithm}

Every candidate holds opponents' identity counts fixed while allowing all validators to adjust their stakes.
The displayed procedure stops when neither candidate group offers a profitable change.
An endpoint is an equilibrium of the full restructuring game only if every feasible count deviation is unprofitable, which is checked separately in the reported computations along with Assumption~\ref{assumption 1}.

\section{Comparison with PoW}\label{pow section}
In the main paper, all of our analysis has focused on PoS. However, the model can be extended to the PoW setting, where miners compute hash values by brute-force search over nonces. In this extension, we assume an idealized environment without adversarial behavior: each miner is honest, meaning that once a valid hash is found, it is immediately broadcast to the network. For simplicity, we further assume that this broadcast incurs no delay.

Under these assumptions, we treat the block race as ending upon discovery, abstracting from propagation and confirmation delays (e.g., the typical six-block confirmation used in practice). With this simplified and honest setting, we can generalize the preceding two-stage analysis to the PoW context.

For a fixed accounting period, assume an exogenous reward budget $R_{\mathrm h}>0$ and a linear cost coefficient $\gamma_i>0$ for miner $i$ over the same period.
Dividing its payoff by $\gamma_i$ gives the cost-normalized counterpart of Equation~\ref{utility1}:
\begin{equation}
    U_i(\mathbf{s})= v(\mathbf{s})-s_i = \alpha_i \cdot \frac{s_i}{s_i+\sum_{j\neq i} s_j}-s_i,
    \label{utilityPoW}  
\end{equation}
where $s_i$ denotes hashing effort and $\alpha_i=R_{\mathrm h}/\gamma_i$ is the reward-to-cost ratio.
This positive scaling preserves each miner's best responses.

This reward allocation in the PoW setting aligns with the approach described in \citet{alsabah2020pitfalls}, where rewards comprising newly minted bitcoins and transaction fees are distributed among miners in proportion to their share of the total network hash power.

The differences between the utility formulation in the PoS setting (Equation~\ref{utility1}) and the PoW setting (Equation~\ref{utilityPoW}) can be summarized as follows:
\begin{itemize}
    \item In PoS, the total reward depends on a term involving the square root of the total balance (i.e., \( \sqrt{\text{total balance}} \)), which does not appear in PoW.
    \item Costs may differ across agents in both models. In the normalized PoW payoff, \( \alpha_i \) captures the reward-to-cost ratio.
\end{itemize}

For a fixed active set of $m$ miners with compact positive effort intervals, the PoW payoffs are continuous and concave in each miner's effort, so a Nash equilibrium exists.
The feasible-set argument in Lemma~\ref{Equivalence of bidding game} shows that the two bidding games
\(
    \left( \{b_i\}_{i=1}^m, \{s_i^*\}_{i=1}^m, m, \mathcal{D} \right)
    \) and \(
    \left( \{s_i^*\}_{i=1}^m, \{s_i^*\}_{i=1}^m, m,\mathcal{D}\right)
    \)
share the equilibrium vector $\mathbf s^*$: restricting upper bounds removes possible deviations without removing that equilibrium.
Here $b_i$ is miner $i$'s effort limit and $\mathcal D$ specifies the reward and effort constraints.
In a block-race interpretation, the success probability $p\in(0,1)$ per attempt can represent mining difficulty, as in \citet{chen2023gacha}.
A larger $p$ means lower difficulty, holding all other factors constant.
Relating $p$ to the fixed-period reward budget requires a model of block duration.
    
Participation throughout a block race, without delayed entry or early exit, requires a separate timing comparison.
We use Markov chain analysis and the framework of gacha games~\citep{chen2023gacha} to examine this question for miners with equal, fixed hashing capacities.

\begin{figure}[h]
\centering
\resizebox{0.9\linewidth}{!}{
\begin{tikzpicture}[
    state/.style={circle, draw=none, fill=blue!20, minimum size=2em, inner sep=2pt},
    arrow/.style={-Stealth, thick},
    redarrow/.style={arrow, red},
    greenarrow/.style={arrow, green!60!black},
    every node/.style={font=\small},
    node distance=2cm and 2.8cm
]

\node[state] (s1) {$S_1$};
\node[state, right=of s1] (s2) {$S_2$};
\node[state, right=of s2] (dots) {$\cdots$};
\node[state, right=of dots] (st) {$S_t$};
\node[state, right=of st] (st1) {$S_{t+1}$};
\node[state, above=1.8cm of dots] (win) {win};
\node[state, below=1.8cm of dots] (quit) {quit};

\draw[arrow] (s1) -- node[midway, above, sloped] {$(1 - p_1),\ -c$} (s2);
\draw[arrow] (s2) -- node[midway, above, sloped] {$(1 - p_2),\ -c$} (dots);
\draw[arrow] (dots) -- node[midway, above, sloped] {$(1 - p_{t-1}),\ -c$} (st);
\draw[arrow] (st) -- node[midway, above, sloped] {$(1 - p_t),\ -c$} (st1);
\draw[arrow] (st1) -- ++(2,0) node[midway, above, sloped] {$(1 - p_{t+1}),\ -c$};

\draw[redarrow] (s1) -- node[midway, above, sloped] {$p_1,\ R - c$} (win);
\draw[redarrow] (s2) -- node[midway, above, sloped] {$p_2,\ R - c$} (win);
\draw[redarrow] (dots) -- node[midway, above, sloped] {$p_t,\ R - c$} (win);
\draw[redarrow] (st) -- node[midway, above, sloped] {} (win);
\draw[redarrow] (st1) -- node[midway, above, sloped] {} (win);

\draw[greenarrow] (s1) -- node[midway, below, sloped] {100\%,\ 0} (quit);
\draw[greenarrow] (s2) -- node[midway, below, sloped] {} (quit);
\draw[greenarrow] (dots) -- node[midway, below, sloped] {} (quit);
\draw[greenarrow] (st) -- node[midway, below, sloped] {} (quit);
\draw[greenarrow] (st1) -- node[midway, below, sloped] {} (quit);

\end{tikzpicture}
}
\Description{A Markov decision process shows a sequence of states with choices to continue pulling or quit, and transitions to an end state with the corresponding rewards and costs.}
\caption{Markov Decision Process. Adapted from \citet{chen2023gacha}.}
\label{gacha game example}
\end{figure}

Figure~\ref{gacha game example} illustrates the Markov decision process (MDP) structure.
Here $S_t$ is the state after $t-1$ unsuccessful pulls, before pull $t$, with trial-specific success probability $\rho_t$.
The prize is $R>0$ and each attempted pull costs $c\ge0$.
At each state, the player faces a decision:
\begin{itemize}
    \item \textbf{Pull:} With probability $p_t$, the player wins and moves to a terminal winning state, receiving a net reward of $R-c$. Otherwise, with probability $1-p_t$, the player pays a cost $c$ and transitions to the next pull state $S_{t+1}$.
    \item \textbf{Quit:} The player can choose to quit at any time, receiving a payoff of zero.
\end{itemize}

Then to make the question simple, the authors in \citet{chen2023gacha} consider an agent to make a decision at state $S_1$ and with the whale property, the agent have will either participate the game until win or choose not to in the game at the beginning. Further, if in all the stage the winning probability for each pull is same, we denote $\tilde{p}$ here ,we can have a simplified version of MDP as Figure \ref{simplified} shows:

\begin{figure}[h]
\centering
\resizebox{0.4\linewidth}{!}{
\begin{tikzpicture}[
    state/.style={circle, draw=none, fill=blue!20, minimum size=2em, inner sep=2pt},
    arrow/.style={-Stealth, thick},
    redarrow/.style={arrow, red},
    selfloop/.style={arrow, loop above, min distance=12mm, in=120, out=60},
    every node/.style={font=\small},
    node distance=2.5cm and 3cm
]

\node[state] (s1) {$S_1$};
\node[state, right=of s1] (end) {end};

\draw[selfloop] (s1) to node[above] {$1-\tilde{p},\ -c$} (s1);

\draw[redarrow] (s1) -- node[midway, above, sloped] {$\tilde{p},\ R-c$} (end);
\end{tikzpicture}
}
\Description{One active state has a self-loop with failure probability and a cost, and a transition to the end state with success probability and net reward.}
\caption{Simplified version of MDP.}
\label{simplified}
\end{figure}

We now apply this framework to the PoW setting, analyzing the mining process from the perspective of a single block.
Assume there are \(m\ge2\) miners with identical, fixed hashing capacity who repeatedly attempt to find a block.
Attempts are independent, with symmetric tie-breaking if several miners succeed.
Each miner has a fixed success probability \(p\in(0,1)\) per round.
The probability that the block is mined in any given round is $\tilde p=1-(1-p)^m$.
With the block reward normalized to 1, each miner's expected reward conditional on a successful round is $1/m$.
Let $c_i\ge0$ now denote miner $i$'s cost per attempted round, held constant across all participation strategies in this block-race model.
Its unconditional expected payoff per active round is $\tilde p/m-c_i$.
Figure~\ref{simplified} represents this payoff calculation when its terminal reward $R$ is interpreted as the conditional expected reward $1/m$ and its cost $c$ as $c_i$.

We compare participation strategies chosen at the initial state $S_1$, corresponding to the beginning of a block.
The other $m-1$ miners continue with fixed capacity until block discovery, and there is no discounting or new information apart from whether a block has been found.
The independent trials make the remaining race identical after each unsuccessful round.

In this setting, miner $i$ may delay entry for $k\ge1$ rounds or stop after at most $k$ attempted rounds, where $k$ is an integer.
Delayed entry occurs at $S_{k+1}$ only if the block has not already been found.
In both diagrams, $S_t$ denotes the state before round $t$, conditional on the race continuing.
We write $p_1=1-(1-p)^{m-1}$ for block discovery while miner $i$ is absent, and $p_2=1-(1-p)^m$ while all $m$ miners participate.
The following two diagrams represent these choices.

\begin{figure}[htbp]
\centering
\begin{minipage}[b]{0.48\textwidth}
\centering
\resizebox{\linewidth}{!}{
\begin{tikzpicture}[
    state/.style={circle, draw=none, fill=blue!20, minimum size=2em, inner sep=2pt},
    arrow/.style={-Stealth, thick},
    stageone/.style={arrow, red, dotted},
    stagetwo/.style={arrow, blue},
    selfloop/.style={arrow, loop above, min distance=12mm, in=120, out=60},
    every node/.style={font=\small},
    node distance=2cm and 2.8cm
]

\node[state] (s1) {$S_1$};
\node[state, right=of s1] (s2) {$S_2$};
\node[state, right=of s2] (dots) {$\cdots$};
\node[state, right=of dots] (skm1) {$S_{k}$};
\node[state, right=of skm1] (sk) {$S_{k+1}$};
\node[state, below=2.2cm of dots] (end) {end};

\draw[stageone] (s1) -- node[midway, above, sloped] {$1 - p_1,\ 0$} (s2);
\draw[stageone] (s2) -- node[midway, above, sloped] {$1 - p_1,\ 0$} (dots);
\draw[stageone] (dots) -- node[midway, above, sloped] {$1 - p_1,\ 0$} (skm1);
\draw[stageone] (skm1) -- node[midway, above, sloped] {$1 - p_1,\ 0$} (sk);

\draw[stageone] (s1) -- node[midway, below, sloped] {$p_1,\ 0$} (end);
\draw[stageone] (s2) -- node[midway, below, sloped] {$p_1,\ 0$} (end);
\draw[stageone] (dots) -- node[midway, below, sloped] {$p_1,\ 0$} (end);
\draw[stageone] (skm1) -- node[midway, below, sloped] {$p_1,\ 0$} (end);

\draw[selfloop, stagetwo] (sk) to node[above] {$1-p_2,\ -c_i$} (sk);
\draw[stagetwo] (sk) -- node[midway, below, sloped] {$p_2,\ \frac{1}{m}-c_i$} (end);

\end{tikzpicture}
}
\par\centering (a) Delay mining
\label{fig:delay-mining-mdp}
\end{minipage}\hfill
\begin{minipage}[b]{0.48\textwidth}
\centering
\resizebox{\linewidth}{!}{
\begin{tikzpicture}[
    state/.style={circle, draw=none, fill=blue!20, minimum size=2em, inner sep=2pt},
    bluearrow/.style={-Stealth, thick, blue},
    every node/.style={font=\small},
    node distance=2.5cm and 2.8cm
]

\node[state] (s1) {$S_1$};
\node[state, right=of s1] (s2) {$S_2$};
\node[state, right=of s2] (dots) {$\cdots$};
\node[state, right=of dots] (skm1) {$S_{k-1}$};
\node[state, right=of skm1] (sk) {$S_{k}$};
\node[state, below=2.5cm of dots] (end) {end};

\draw[bluearrow] (s1) -- node[midway, above, sloped] {$1-p_2,\ -c_i$} (s2);
\draw[bluearrow] (s2) -- node[midway, above, sloped] {$1-p_2,\ -c_i$} (dots);
\draw[bluearrow] (dots) -- node[midway, above, sloped] {$1-p_2,\ -c_i$} (skm1);
\draw[bluearrow] (skm1) -- node[midway, above, sloped] {$1-p_2,\ -c_i$} (sk);

\draw[bluearrow] (s1) -- node[midway, below, sloped] {$p_2,\ \frac{1}{m}-c_i$} (end);
\draw[bluearrow] (s2) -- node[midway, below, sloped] {$p_2,\ \frac{1}{m}-c_i$} (end);
\draw[bluearrow] (dots) -- node[midway, below, sloped] {$p_2,\ \frac{1}{m}-c_i$} (end);
\draw[bluearrow] (skm1) -- node[midway, below, sloped] {$p_2,\ \frac{1}{m}-c_i$} (end);

\draw[bluearrow] (sk) -- node[midway, below, sloped] {$1,\ p_2\cdot\frac{1}{m}-c_i$} (end);

\end{tikzpicture}
}
\par\centering (b) Early termination
\label{fig:early-terminate-mdp}
\end{minipage}
\Description{Two panels compare delay mining and early termination. The delay strategy stays inactive for an initial number of rounds before joining the mining race. The early-termination strategy stops after a fixed number of attempts.}
\caption{Agent's behavior in the MDP.}
\label{fig:pow-timing-mdp}
\end{figure}

Figure~\ref{fig:pow-timing-mdp}(a) illustrates the strategic decision at state $S_1$, where miner $i$ considers delaying mining until state $S_{k+1}$. The red dotted line represents the \textit{delay strategy}: here, only the other $m-1$ miners participate, resulting in zero cost for miner $i$. The block may be found by the other miners before the delayed entry, so reaching $S_{k+1}$ requires $k$ unsuccessful rounds. We define the group success probabilities as $p_1 = 1 - (1 - p)^{m-1}$ for a reduced committee of size $m-1$, and $p_2 = 1 - (1 - p)^m$ for a full committee of size $m$. Conversely, the solid blue line corresponds to the participation strategy, where miner $i$ contributes its fixed hash power. Its expected payoff per active round is $p_2/m-c_i$.

Figure~\ref{fig:pow-timing-mdp}(b) depicts the \textit{early termination} decision.
At state $S_1$, miner $i$ commits to making at most $k$ attempts, stopping after attempt $k$ if the block has not already been found.
At $S_k$, the miner makes its final attempt and then its participation ends with probability 1.
The expected incremental payoff in this last round is $p_2/m-c_i$, accounting for the success rate when all $m$ miners participate.
The terminal state ends miner $i$'s payoff accounting, even if the other miners continue.

The following proposition compares these two timing deviations with participation until block discovery, conditional on nonnegative expected payoff.

\begin{proposition}
    In the steady state for participants, each agent will have no incentive to delay mining or early terminate if the decision is made at the beginning of the block.\label{prop pow}
\end{proposition}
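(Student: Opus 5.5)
We compare three strategies evaluated at $S_1$, using the stationarity of the race. Write $\pi=p_2/m-c_i$ for the expected incremental payoff of one active round when all $m$ miners participate. The hypothesis of nonnegative expected payoff (the "steady state for participants") means $\pi\ge0$.

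\textbf{Baseline value.} Full participation until discovery has value
\[
V_{\mathrm{full}}=\sum_{t\ge1}(1-p_2)^{t-1}\pi=\frac{\pi}{p_2}.
\]
Each round is reached with probability $(1-p_2)^{t-1}$, and the memorylessness of independent trials makes the continuation value at every $S_t$ equal to $V_{\mathrm{full}}$.

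\textbf{Delayed entry.} Delaying entry by $k\ge1$ rounds earns zero during rounds $1,\dots,k$. The race reaches $S_{k+1}$ only if the other $m-1$ miners fail $k$ times, with probability $(1-p_1)^k$. By the identical-race property, the continuation value there is again $V_{\mathrm{full}}$. So
\[
V_{\mathrm{delay}}=(1-p_1)^k\,\frac{\pi}{p_2}.
\]
Since $p_1>0$ when $m\ge2$ and $p\in(0,1)$, we have $(1-p_1)^k<1$, hence $V_{\mathrm{delay}}\le V_{\mathrm{full}}$. The inequality is strict when $\pi>0$ and is an equality at $\pi=0$, where the miner is indifferent and so still has no strict incentive to deviate.

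\textbf{Early termination.} Stopping after at most $k$ attempts truncates the geometric sum:
\[
V_{\mathrm{stop}}=\sum_{t=1}^{k}(1-p_2)^{t-1}\pi=\frac{\pi}{p_2}\bigl(1-(1-p_2)^k\bigr).
\]
Then
\[
V_{\mathrm{full}}-V_{\mathrm{stop}}=(1-p_2)^k\,\pi/p_2\ge0,
\]
because every omitted round contributes the same nonnegative expected payoff $\pi$. This covers the final round in panel (b), whose incremental payoff is also $p_2/m-c_i=\pi$.

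\textbf{Main obstacle.} The difficult step is modeling, not algebra. One must justify that the per-round payoff in the delay branch after $S_{k+1}$ is exactly $\pi$, with all $m$ miners active, and that conditioning on the race continuing does not change it. This follows from independence of attempts and the fixed, symmetric capacities stated in the setup.

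It is also worth noting that both deviations are zero-cost during the rounds where the miner is absent. With $\pi\ge0$, the comparison therefore reduces to the observation that removing rounds with nonnegative expected value cannot help. If one wants a strict statement, I would add the assumption $\pi>0$ and invoke the strict inequalities above.
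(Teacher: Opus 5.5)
Your proposal is correct and follows the paper's proof. You derive the same full-participation value $\pi/p_2=1/m-c_i/p_2$, the same delayed-entry value $(1-p_1)^k\pi/p_2$, and the same truncated value $\bigl(1-(1-p_2)^k\bigr)\pi/p_2$, and you conclude as the paper does under nonnegative expected payoff (strict loss when $\pi>0$, indifference when $\pi=0$); the only difference is bookkeeping, since you sum per-round payoffs weighted by reach probabilities while the paper sums over the discovery round with cumulative costs, giving identical closed forms.
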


\begin{proof} 

\textbf{Analysis of Figure~\ref{fig:pow-timing-mdp}(a):} If agent \( i \) chooses \emph{not} to delay mining at the beginning, the expected payoff over the entire process is:
\[
    u_i = \sum_{t=0}^\infty p_2 (1 - p_2)^t \left[\frac{1}{m} - (t + 1)c_i\right] = \frac{1}{m} - \frac{c_i}{p_2}.
\]

However, if the agent chooses to delay mining, then the expected payoff becomes:
\[
u_i' = (1 - p_1)^k \sum_{t=0}^\infty p_2 (1 - p_2)^t \left[\frac{1}{m} - (t + 1)c_i\right] 
= (1 - p_1)^k \cdot \left(\frac{1}{m} - \frac{c_i}{p_2}\right) \leq u_i.
\]

Hence, delaying weakly reduces expected payoff, with a strict reduction when $u_i>0$.

\textbf{Analysis of Figure~\ref{fig:pow-timing-mdp}(b):} Suppose miner $i$ makes at most $k$ attempts and stops after attempt $k$ if the block is still undiscovered.
The expected payoff is:
\[
\begin{aligned}
u_i'' &= \sum_{t=0}^{k-2} p_2 (1 - p_2)^t \left[\frac{1}{m} - (t + 1)c_i\right] 
+ (1 - p_2)^{k-1} \left[p_2 \cdot \frac{1}{m} - k c_i\right] \\
&=\left[1-(1-p_2)^k\right]\left(\frac{1}{m}-\frac{c_i}{p_2}\right)
 \le u_i.
\end{aligned}
\]

For $k=1$, the sum is empty.
The final term includes the costs of all $k$ attempts when the last round is reached.
Since $0<p_1,p_2<1$, both deviations reduce payoff when $u_i>0$ and leave it unchanged when $u_i=0$.
Therefore, an individually rational agent has no profitable delay or early-exit deviation under the stated assumptions.
\qedhere
\end{proof}

With this proposition then we can safely use general utility function for each miner per block in the Equation (\ref{utilityPoW}). The analysis from MDP assumes each agent have the same reward but different cost, it's equivalent to assume the same cost and different reward, as Equation (\ref{utilityPoW}) shows. In other words, the two-stage game can be extended to PoW analysis to find the optimal mining difficulty $p^*$.

\end{document}